\newtheorem{theorem}{Theorem}
\newtheorem{lemma}{Lemma}
\newtheorem{claim}{Claim}
\newtheorem{observation}{Observation}
\newcommand{\namedref}[2]{\hyperref[#2]{#1~\ref*{#2}}}
\newcommand{\Sectionref}[1]{\namedref{Section}{sec:#1}}
\newcommand{\Appendixref}[1]{\namedref{Appendix}{app:#1}}
\newcommand{\Theoremref}[1]{\namedref{Theorem}{thm:#1}}
\newcommand{\Lemmaref}[1]{\namedref{Lemma}{lem:#1}}
\newcommand{\Claimref}[1]{\namedref{Claim}{clm:#1}}
\newcommand{\Figureref}[1]{\namedref{Figure}{fig:#1}}
\newcommand{\Equationref}[1]{\namedref{Equation}{eq:#1}}
\newcommand{\Eqref}[1]{\eqref{eq:#1}}
\newcommand{\Footnoteref}[1]{\namedref{Footnote}{foot:#1}}
\newcommand{\prot}{\ensuremath{\pi}\xspace}
\newcommand{\eps}{\ensuremath{\mathscr{E}}\xspace}
\newcommand{\ceps}{\ensuremath{\varepsilon}\xspace}
\newcommand{\Real}{\ensuremath{{\mathbb R}}\xspace}
\newcommand{\pr}{\operatorname{Pr}}
\newcommand{\prob}{\mathbf{p}}
\newcommand{\err}{\ensuremath{\mathrm{err}}\xspace}
\newcommand{\avgerr}{\ensuremath{\overline{\err}}\xspace}
\newcommand{\cA}{\ensuremath{\mathcal{A}}\xspace}
\newcommand{\cB}{\ensuremath{\mathcal{B}}\xspace}
\newcommand{\cD}{\ensuremath{\mathcal{D}}\xspace}
\newcommand{\cQ}{\ensuremath{\mathcal{Q}}\xspace}
\newcommand{\cS}{\ensuremath{\mathcal{S}}\xspace}
\newcommand{\cT}{\ensuremath{\mathcal{T}}\xspace}
\newcommand{\cX}{\ensuremath{\mathcal{X}}\xspace}
\newcommand{\cY}{\ensuremath{\mathcal{Y}}\xspace}
\newcommand{\cZ}{\ensuremath{\mathcal{Z}}\xspace}
\newcommand{\mininfo}{\ensuremath{I_{\infty}}\xspace}
\newcommand{\ICr}{\ensuremath{IC_{\infty}}\xspace}
\newcommand{\IC}{\ensuremath{IC}\xspace}
\newcommand{\ICint}{\ensuremath{IC^{\mathrm{int}}}\xspace}
\newcommand{\CC}{\ensuremath{R}\xspace}
\newcommand{\p}{\ensuremath{\mathrm{p}}\xspace}
\newcommand{\pIC}{\ensuremath{{\p}IC}\xspace}
\newcommand{\pICr}{\ensuremath{{\p}IC_{\infty}}\xspace}
\newcommand{\pICint}{\ensuremath{{\p}IC^{\mathrm{int}}}\xspace}
\newcommand{\xp}{\ensuremath{\mathrm{\hat{p}}}\xspace}
\newcommand{\xpIC}{\ensuremath{{\xp}IC}\xspace}
\newcommand{\xpICr}{\ensuremath{{\xp}IC_{\infty}}\xspace}
\newcommand{\xpICint}{\ensuremath{{\xp}IC^{\mathrm{int}}}\xspace}
\newcommand{\prt}{\ensuremath{\mathrm{prt}}\xspace}
\newcommand{\relprt}{\ensuremath{\overline{\mathrm{prt}}}\xspace}
\begin{document}
\title{R\'enyi Information Complexity and an Information Theoretic Characterization of the Partition Bound}
\author{Manoj M. Prabhakaran\\
Department of Computer Science\\
University of Illinois\\
 Urbana-Champaign, IL\\
	\texttt{ mmp@illinois.edu}
\and
 Vinod M. Prabhakaran\\
School of Technology and Computer Science\\
Tata Institute of Fundamental Research\\
Mumbai, India.\\
	\texttt{vinodmp@tifr.res.in}}
\maketitle
\begin{abstract}
In this work we introduce a new information-theoretic complexity measure
for 2-party functions, called R\'enyi information complexity. It is a
lower-bound on communication complexity, and has the two leading
lower-bounds on communication complexity as its natural relaxations:
(external) information complexity and logarithm of partition complexity.
These two lower-bounds had so far appeared conceptually quite different from
each other, but we show that they are both obtained from R\'enyi information
complexity using two different, but natural relaxations:

1. The relaxation of R\'enyi information complexity that yields information
complexity is to change the order of R\'enyi mutual information used in its
definition from infinity to 1.

2. The relaxation that connects R\'enyi information complexity with
partition complexity is to replace protocol transcripts used in the
definition of R\'enyi information complexity with what we term
``pseudotranscripts,'' which omits the interactive nature of a protocol, but
only requires that the probability of any transcript given inputs $x$ and
$y$ to the two parties, factorizes into two terms which depend on $x$ and
$y$ separately. While this relaxation yields an apparently different
definition than (log of) partition function, we show that the two are in
fact identical. This gives us a surprising characterization of the partition
bound in terms of an information-theoretic quantity.

We also show that if both the above relaxations are simultaneously applied
to R\'enyi information complexity, we obtain a complexity measure that is
lower-bounded by the (log of) relaxed partition complexity, a complexity
measure introduced by Kerenidis et al. (FOCS 2012). We obtain a sharper
connection between (external) information complexity and relaxed partition
complexity than Kerenidis et al., using an arguably more direct proof.

Further understanding R\'enyi information complexity (of various orders)
might have consequences for important direct-sum problems in communication
complexity, as it lies between communication complexity and information
complexity.
\end{abstract}

\section{Introduction}
\label{sec:intro}

Communication complexity, since the seminal work of Yao~\cite{Yao79}, has
been a central question in theoretical computer science. Many of the recent
advances in this area have centered around the notion of information
complexity, which measures the {\em amount of information} about the inputs
-- rather than the {\em number of bits} -- that should be present in a
protocol's transcript, if it should compute a function (somewhat) correctly.
The more traditional approach for lower bounding communication complexity
relied on {\em combinatorial complexity measures} of functions. The goal of this
work is to relate these two lines of studying communication complexity with
each other.

Currently, the two leading lower bounds for communication complexity in the
literature come from these two lines: (external) information complexity \IC
\cite{ChakrabartiShWiYa01,Bar-YossefJaKuSi04} and partition complexity \prt
\cite{JainKl10}. Either of these two lower bounds upper-bounds (and hence gives an
equally good or better lower bound than) all the other bounds used in the
literature. An intriguing problem in this area has been to understand if one
of these two bounds is a better lower-bound than the other. An important
motivation behind this problem is the possibility of separating \IC from
communication complexity via an intermediate combinatorial lower bound,
which will have consequences for direct-sum results in communication
complexity (since \IC is known to be equal to amortized communication
complexity~\cite{BravermanRa11,Braverman12}).

\begin{wrapfigure}[30]{r}{0.4\textwidth}
\centering
\includegraphics[page=1,trim=250 0 100 0,clip,width=\linewidth]{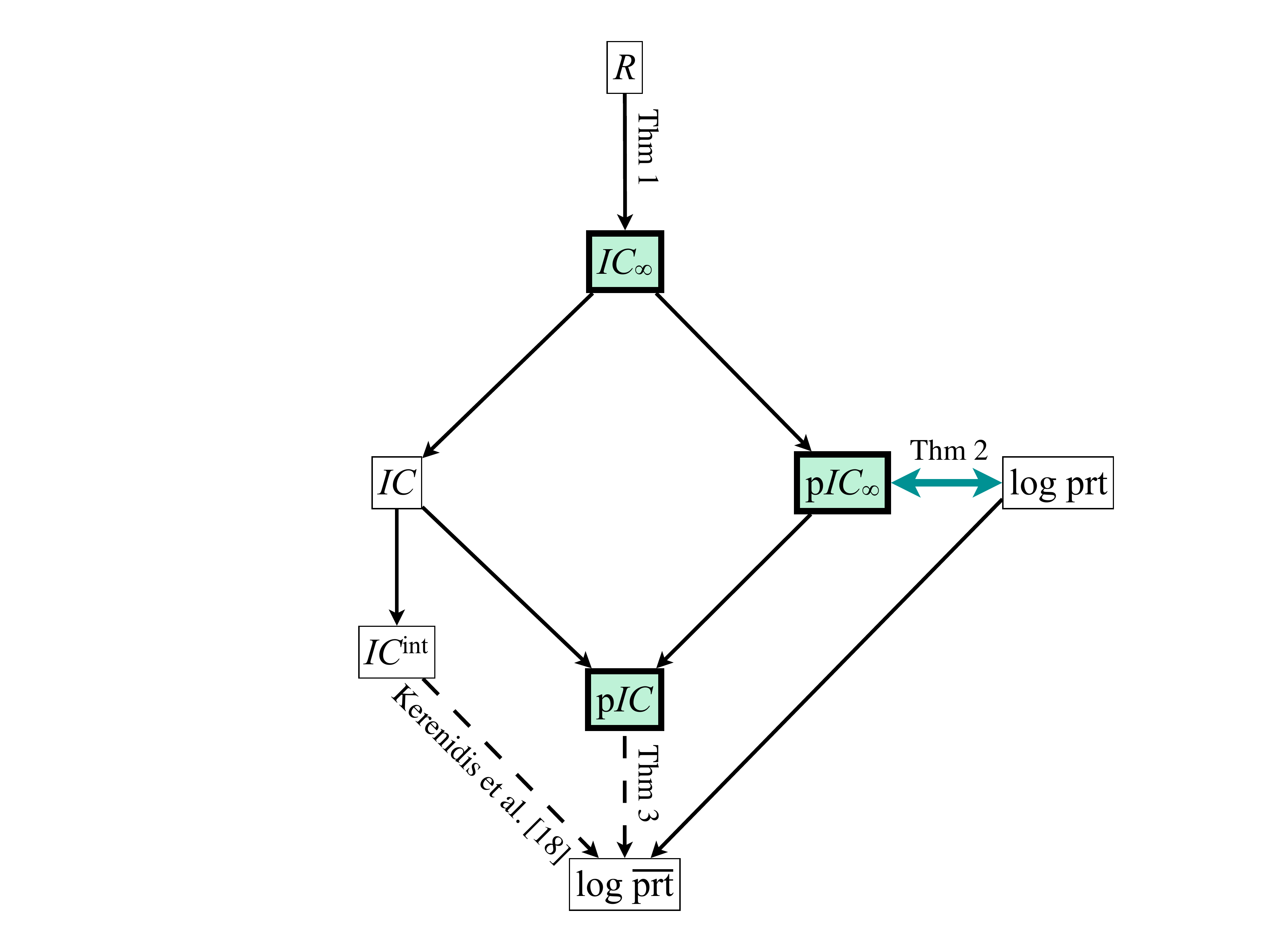}
\caption{New complexity measures (shaded) and their relation to existing
ones. Existing ones shown include the (public-coin) worst-case
communication complexity ($R$), external and internal information complexity
(\IC and \ICint), partition complexity (\prt) and relaxed partition
complexity (\relprt). An arrow from one measure to another shows that the
latter is a lower-bound for the former.  (The dashed lines indicate that the
lower bound holds up to constant factors and shifts in error bounds.) \pICr
is exactly equal to $\log\prt$.\label{fig:map}}
\end{wrapfigure}
Kerenidis et al.\ \cite{KerenidisLaLeRoXi12} showed that information
complexity ``subsumes'' (the logarithm of) a relaxed variant of partition
complexity, \relprt, in the sense that any lower bound on $\log\relprt$ in
fact yields a lower bound on information complexity. Thus bounding
$\log\relprt$ cannot yield stronger lower bounds than bounding information
complexity. In turn, all the combinatorial bounds in the literature -- other
than $\log\prt$ -- are subsumed by $\log\relprt$. On the other hand, in recent breakthrough
results, Ganor, Kol and Raz \cite{GanorKoRa14FOCS,GanorKoRa15} showed that
for a certain range of parameters, combinatorial lower bounds can be
significantly stronger than information complexity lower bounds.%
\footnote{These results use combinatorial lower bounds to 
establish that communication complexity could be
exponentially larger than information complexity. The communication
complexity in these examples is (sub-)logarithmic in the size of the input
itself.}
It remains open if such separations are possible for a less restrictive
range of parameters (e.g., with communication complexity that is say,
super-logarithmic in the input size). In the absence of a result analogous
to that of \cite{KerenidisLaLeRoXi12} for \prt itself, \prt remains a
candidate for showing such separations.

In this work, we do not pursue the question of whether $\log \prt$ could be
larger than \IC or vice versa. Instead, we develop a new
information-theoretic complexity measure, \ICr which is as large or larger
than both \IC and $\log \prt$ (see \Figureref{map}), and has {\em natural}
relaxations that yield \ICr and $\log \prt$ respectively. \ICr {\em is thus
a candidate for separating \IC and communication complexity for a larger
range of parameters than currently known to be possible}. Further, the
relaxation of \ICr to $\log \prt$ reveals a surprising information-theoretic
definition for \prt.  Since this new definition of (log of) \prt has a
markedly different form, we give it a different name, \pICr.

We also consider applying {\em both the relaxations} mentioned above
simultaneously to \ICr. This yields a new complexity measure \pIC. We then 
show that \pIC is essentially lower bounded by $\log \relprt$, the relaxed partition complexity.  {\em
This recovers a result similar to that of \cite{KerenidisLaLeRoXi12}, but
with sharper parameters and an
arguably more direct proof.}%
\footnote{Our result does not subsume the result of Kerenidis et al.\
\cite{KerenidisLaLeRoXi12}, as they deal with internal information
complexity, while it is more natural for us to work with external
information complexity. Conversely, the result of \cite{KerenidisLaLeRoXi12}
does not yield our result for external information complexity (due to the
parameters), nor the relation with the intermediate complexity measure
\pIC.}

The relation between the new and old complexity measures are shown in
\Figureref{map}. (Also see \Figureref{map-ext} for further extensions.)
The new complexity measures are informally described below.

\subparagraph*{R\'enyi Information Complexity.}
(External) Information complexity of a function is defined as the mutual
information between the transcript and the inputs, and is a lower bound
on the communication complexity of the function. The notion of mutual
information in this definition is due to Shannon. R\'enyi mutual
information $I_\alpha(A;B)$, parametrized by $\alpha \ge 0$, is a
generalization of Shannon's mutual information (see \cite{Verdu15} for a
recent treatment), with the latter corresponding to $\alpha \rightarrow 1$. 
We observe that information complexity continues to be a lower bound on
communication complexity for all values of $\alpha$. In particular, we may
consider $I_\infty$ instead of $I_1$ to define information complexity.
The resulting notion of information complexity will be called $\ICr$.

\subparagraph*{Pseudotranscript Complexity.} Communication complexity, as well
as information complexity, is defined in terms of a protocol. In contrast,
the more traditional combinatorial lower bounds on communication complexity
are defined in terms of simpler combinatorial properties of the function's
truth table. We propose complexity measures based on one such property
(which has been widely used in the analysis of protocols, but to the best of
our knowledge, has never been isolated to define a complexity measure of
functions). 

Consider a function (generalized later to relations)
$f:\cX\times\cY\rightarrow\cZ$. We define a random variable $Q$ over a space
\cQ to be a {\em pseudotranscript} for $f$ if there exist two functions
$\alpha:\cQ\times\cX\rightarrow\Real^+$ and
$\beta:\cQ\times\cY\rightarrow\Real^+$, such that $\Pr[Q=q|X=x,Y=y] =
\alpha(q,x)\beta(q,y)$, for all $q\in\cQ,x\in\cX,y\in\cY$. This definition
is motivated by the fact that the transcripts in a protocol do satisfy it
(see \Footnoteref{trans-is-pseudo}). However, a pseudotranscript need not
correspond to a protocol (indeed, any ``tiling'' of a function's table
yields a pseudotranscript, but it need not correspond to a valid protocol).
We also associate a value $z_q$ with a pseudotranscript $q$; the error
$\err_{f,Q}$ is defined in terms of the probability of this value matching
the function's output. We do not include any other properties of a protocol
in defining a pseudotranscript.

We can define complexity measures \pIC and \pICr as relaxations of \IC and
\ICr, simply by replacing protocols in their definitions with
pseudotranscripts.

\subparagraph*{Relations Among the Complexities.} The main results in this
work, apart from introducing the new complexity measures, are connections
between \pICr and \prt and between \pIC and \relprt.

\noindent $\bullet$ Firstly, we show that $\pICr=\log \prt$.
\pICr and $\prt$ are defined very differently. \prt is concerned with
{\em tiling} the function table with weighted tiles: a tile $t$ is a
rectangle in the input domain along with an output value $z_t$. \prt is the
minimum total weight of tiles needed such that for each input $(x,y)$, the
weight of the tiles covering it adds up to 1, and the weight of the tiles
with $z_t\not=f(x,y)$ is below the error threshold $\eps(x,y)$.%
\footnote{For \prt, as well as \pICr and \ICr, we use a very general notion
of error, in which the error is specified as a function
$\eps:\cX\times\cY\rightarrow[0,1]$.}
On the other hand, \pICr relates to pseudotranscripts $q$, which are similar
to tiles in that they define a value $z_q$ and a rectangle of all $(x,y)$
such that $p(q|x,y)>0$, but are more general in that there is no single
``weight'' on such a rectangle. Given our definitions, it is not hard to see
that $\log \prt$ is as large or larger than \pICr, as any tiling can be
naturally interpreted as a pseudotranscript $Q$ with the same error, and in
that case, the log of the value of the tiling indeed equals
$\mininfo(X,Y;Q)$. What is more surprising is that any pseudotranscript $Q$
can be converted to a tiling of the appropriate value (and same error). This
conversion ``slices'' an uneven weight function $p(q|x,y)$ over a rectangle
into weights $\omega_{q,t}$ over tiles $t$ inside the rectangle; the weight
of a tile $t$ is the sum of the contributions to its weight from all the
different values of $q$: $w(t)=\sum_q \omega_{q,t}$. Then it turns out that
the value of the tiling so obtained will be equal to $\mininfo(X,Y;Q)$. 

This equivalence gives a new perspective on the partition complexity.
Firstly, it shows that partition complexity exploits {\em exactly} the
properties of a pseudotranscript, which is not apparent from its original
definition. Secondly, it gives an information theoretic interpretation of a
complexity measure defined in a traditional combinatorial manner. This is
the first instance of the two lines of lower-bounding techniques for
communication complexity -- information theoretic and combinatorial --
converging. 

\noindent $\bullet$ Our second main result is that lower bounds on $\log
\relprt$ are in fact lower bounds on \pIC. More precisely, we show that
$\pIC(f,\ceps) \ge \delta\log\relprt(f,\ceps+\delta) -
(\delta\log\log|\cX||\cY| + 3)$.  This is along the same lines as the result
of \cite{KerenidisLaLeRoXi12}, with improved parameters (in
\cite{KerenidisLaLeRoXi12}, the multiplicative overhead in the leading term
is $\delta^2$ instead of $\delta$). 

The proof of this result is technically more involved, but is
closely based on the simple slicing construction from the above result. The
high-level idea is to first slice $p(q|x,y)$ into weights $\omega_{q,t}$
for each tile $t$, and then discard the contributions to $w(t)$ from
those $\omega_{q,t}$ which are too large. One needs to ensure that the
weight of the tiles discarded in this fashion is small (as it contributes
to the error), while the weight of the remaining tiles is also small (as
it contributes to the value of the tiling). For the first part, we show
how (Shannon's) mutual information $I(X,Y;Q)$ can be approximated by a convex
combination of non-negative values, and then apply Markov's
inequality. For the second part, we rely on a geometric argument to derive a
bound on the weight of the remaining tiles.

\subsection{Related Work}
Many of the recent advances in the field of communication complexity
\cite{Yao79} have followed from using various notions of information
complexity. Earlier notions of information complexity appeared implicitly
in several works \cite{Ablayev96,PonzioRaVe01,SaksSu02}, and was first
explicitly defined in \cite{ChakrabartiShWiYa01} and further developed in
\cite{Bar-YossefJaKuSi04}. Information
complexity has been extensively used or studied in the recent
communication complexity literature (e.g., 
\cite{BravermanRa11,Braverman12,BravermanWe12,ChakrabartiKoWa12,KerenidisLaLeRoXi12,BarakBrChRa13,GanorKoRa14FOCS,FontesJaKeLaLaRo15,GanorKoRa15}).
The notion was also adapted to specialized models or tasks
\cite{JayramKuSi03, JainRaSe03, JainRaSe05,HarshaJaMcRa10}.

The partition bound was developed in \cite{JainKl10}, and has subsumed a
long line of combinatorial bounds \cite{KushilevitzNi97book} (see e.g.,
\cite{JainKl10,FontesJaKeLaLaRo15}). The relaxed partition bound put
forth in \cite{KerenidisLaLeRoXi12}, similarly subsumes several
combinatorial bounds, with the exception of the partition bound itself.

In 1960, generalizing Shannon's entropy, R\'enyi proposed new measures of
entropy and divergence~\cite{Renyi60}, now known after him. Subsequently,
several authors developed different notions of mutual information based on
these measures. One such definition attributed to Sibson~\cite{Sibson69}
has recently come to be regarded as the most standard choice \cite{Verdu15},
and this is the basis for our definition of $\mininfo(A:B)$. Properties of
$I_\alpha$ for various parameters $\alpha$ have been studied in
\cite{HoVe15,Verdu15}. 
In information theory literature, the use of generalized notions of mutual
information to obtain strong lower bounds for ``one-shot'' versions of
communication problems (rather than amortized/direct-sum versions where
Shannon's mutual information is often appropriate) has a long history
starting with the work of Ziv and Zakai~\cite{ZivZa73,ZakaiZi75}. In the
communication complexity literature, R\'enyi divergence was used as a
technical tool in deriving one of the results in \cite{Bar-YossefJaKuSi04}.

Recently, the authors of this work proposed a distributional complexity
measure, {\em Wyner tension} (or more generally, tension gap) which is a
lower bound for information complexity \cite{PrabhakaranPr14a}. We leave it
for future work to explore the exact connections between these bounds and
the ones in the current work. We mention that for the case when the inputs
are independent, Wyner tension is identical to \pICint (defined in
\Sectionref{extensions}), and a result in \cite{PrabhakaranPr14a} is
subsumed by the results in this work.

\section{Preliminaries}
\label{sec:prelims}

Let $f:{\mathcal X}\times{\mathcal Y} \rightarrow 2^{\mathcal Z}$ be a
relation. Alice who has input $x\in{\mathcal X}$ and Bob who has input
$y\in{\mathcal Y}$ want to output any $z\in f(x,y)$. 
We consider public-coin protocols, in which Alice and Bob have access to
a common random string independent of the inputs; they may also use private
local randomness. For such a protocol $\prot$, we say that the probability
of error, which we view as a {\bf function} of $(x,y)\in \cX\times \cY$, is
\[ \err_{f,\prot}(x,y) = \pr[\prot(x,y)\notin f(x,y)],\] where $\prot(x,y)$ is
the {\em output} of the protocol and the probability is over the randomness in
the protocol execution.%
\footnote{For a protocol to be considered valid, we will insist that the two
parties output the same value with probability 1; hence the output of a
protocol is well-defined.}
An error function $\eps$ that is of particular interest is the constant (or
worst-case) error function: $\eps(x,y)=\ceps$ for some constant $\ceps$, for
all $(x,y)\in\cX\times\cY$.

For a protocol \prot, let $\#\text{bits}(\prot,x,y)$ denote the maximum
number of bits exchanged in an execution of \prot with inputs $(x,y)$, in
the worst case (i.e., over all choices of randomness). Note that this
measure excludes the number of bits in the public randomness. The
(worst case) {\em communication complexity} $\CC(f,\eps)$ of $f$,
for an error function $\eps$, is defined as
\[ \CC(f,\eps) = \inf_{\substack{\text{protocol } \prot:\\\err_{f,\prot}\leq \eps}}\; \max_{x,y}\quad \#\text{bits}(\prot,x,y).\]
To define information complexities, we will need to consider the
distribution $\prob_{X,Y}$ on the inputs $X,Y$. Let $\Pi$ be the
random variable that denotes 
the communication transcript {\em and} 
the public-coins 
of the protocol $\prot$.
Then, the external information cost of the protocol $\pi$ under the input distribution $\prob_{X,Y}$
is $I(X,Y;\Pi)$, i.e., 
the amount of
information about the inputs $X,Y$ contained in $\Pi$.
The (non-distributional) {\em external information complexity} $\IC(f,\eps)$ is defined as
\[ \IC(f,\eps) = \inf_{\substack{\text{protocol } \prot:\\\err_{f,\prot}\leq \eps}}\; \max_{\prob_{X,Y}}\quad I(X,Y;\Pi).\]
Similarly, internal information complexity is defined as
\[ \ICint(f,\eps) = \inf_{\substack{\text{protocol } \prot:\\\err_{f,\prot}\leq \eps}}\; \max_{\prob_{X,Y}}\quad I(X;\Pi|Y) + I(Y;\Pi|X).\]
Here the internal information cost, $I(X;\Pi|Y) + I(Y;\Pi|X)$, of the protocol $\pi$ under input distribution $\prob_{X,Y}$ is the sum of the information learned by the parties about each other's input from $\Pi$.
The following relationship between these quantities is well-known.
\[ \ICint(f,\eps) \leq \IC(f,\eps) \leq \CC(f,\eps).\]

A {\em tile} for $(\cX,\cY,\cZ)$ is a pair $(r_X\times r_Y,z)$, where $r_X
\subseteq \cX$, $r_Y \subseteq \cY$ and $z\in\cZ$. If $t= (r_X\times
r_Y,z)$, then we let $\cX_t, \cY_t$, and $z_t$ denote $r_X,r_Y$ and $z$
respectively. We say $(x,y)\in t$ if and only if $x\in \cX_t$
and $y\in \cY_t$. The set of all tiles for $(\cX,\cY,\cZ)$ is denoted by $\cT(\cX,\cY,\cZ)$ or simply $\cT$ (if $\cX,\cY,\cZ$ are clear from the context).

For a relation $f:{\mathcal X}\times{\mathcal Y} \rightarrow 2^{\mathcal Z}$
and probability of error $\eps:\cX\times \cY \rightarrow [0,1]$, the
partition complexity~\cite{JainKl10} is defined as follows:%
\footnote{The definition presented in \cite{JainKl10} is slightly more
restrictive in the kind of relations and error functions considered.}
\begin{align}
\prt(f,\eps) = \min_{w:\cT\rightarrow[0,1]} \sum_{t\in \cT}\; w(t)\qquad&\text{subject to} \notag\\
\sum_{t\in \cT: (x,y)\in t} w(t) &= 1, &&\forall (x,y)\in \cX\times\cY
\label{eq:prt-total}\\
\sum_{\substack{t\in \cT: (x,y)\in t,\\z_t \in f(x,y)}} w(t) &\geq 1 - \eps(x,y), &&\forall (x,y)\in \cX\times\cY.
\label{eq:prt-err}
\end{align}
For a weight function $w$ as above, we define the error function as
$\err_{f,w}(x,y) = \sum_{\substack{t\in \cT: (x,y)\in t,\\z_t \notin f(x,y)}} w(t)$;
then the condition \Eqref{prt-err} can be written as a condition on this error
function: $\err_{f,w} \le \eps$.

The relaxed partition complexity~\cite{KerenidisLaLeRoXi12} relaxes the equality constraint in \Eqref{prt-total} to an inequality. Further, the error function is restricted to be a constant function given by
$\eps(x,y)=\ceps$. Specifically, for a relation $f$ and a constant $0 \le \ceps \le 1$,
\begin{align}
\relprt(f,\ceps) = \min_{w:\cT\rightarrow[0,1]} \sum_{t\in \cT}\; w(t)\quad&\text{subject to}\notag\\
\sum_{t\in \cT: (x,y)\in t} w(t) &\le 1, &&\forall (x,y)\in \cX\times\cY
\label{eq:relprt-total}\\
\sum_{\substack{t\in \cT: (x,y)\in t,\\z_t \in f(x,y)}} w(t) &\geq 1 - \ceps, &&\forall (x,y)\in \cX\times\cY.
\label{eq:relprt-err}
\end{align}
The distributional form of relaxed partition complexity is defined for a distribution $\mu$ and $\ceps\in[0,1]$ as follows:
\begin{align*}
\relprt^\mu(f,\ceps) = \min_{w:\cT\rightarrow[0,1]}\quad &\sum_{t\in \cT} w(t)\quad
\text{subject to}
&\begin{aligned}
\forall (x,y)\in \cX\times\cY \;\; \sum_{t\in \cT: (x,y)\in t} w(t) &\leq 1, \\
\sum_{x,y} \mu(x,y) \sum_{\substack{t\in \cT: (x,y)\in t,\\z_t \in f(x,y)}} w(t) &\geq 1 - \ceps.
\end{aligned}
\end{align*}
For a weight function $w$ as above and a distribution $\mu$ over
$\cX\times\cY$, we write $\avgerr^\mu_{f,w}$ for
$1-\sum_{x,y} \mu(x,y) \sum_{\substack{t\in \cT: (x,y)\in t,\\z_t \in f(x,y)}} w(t)$; so the
second condition can be written as $\avgerr^\mu_{f,w} \le \ceps$.
As shown in \cite{KerenidisLaLeRoXi12}, $\relprt(f,\ceps) = \max_\mu \relprt^\mu(f,\ceps)$.

\section{R\'enyi Information Complexity and Pseudotranscripts}
\label{sec:definitions}
In this section we define our new complexity measures.

\subparagraph*{R\'enyi information complexity.}
For a pair of random variables $(A,B)$ over $\cA\times\cB$,
R\'enyi mutual information of order~$\infty$ is defined as (see, e.g.,~\cite{Verdu15})
\[ \mininfo(A;B) = \log \left( \sum_{b\in\cB}\; \max_{a\in\cA: \prob_A(a) > 0}\, \prob_{B|A}(b|a) \right). \] 

For a protocol $\prot$ and an input distribution $\prob_{X,Y}$, we will call
$\mininfo(X,Y;\Pi)$ the R\'enyi information cost.  {\em R\'enyi information
complexity} $\ICr(f,\eps)$ is defined as the smallest worst-case (over input
distributions) R\'enyi information cost of any protocol which has a
probability of error at most $\eps(x,y), x\in\cX, y\in\cY$.
\begin{align*}
\ICr(f,\eps) 
= \inf_{\substack{\text{protocol } \prot:\\\err_{f,\prot}\leq \eps}}\; \max_{\prob_{X,Y}} \; \mininfo(X,Y;\Pi).
\end{align*}
Note the above definition is identical to the definition of $\IC(f,\eps)$
except that $\mininfo$ is used in place of mutual information $I$. It is
easy to see that the inner maximization above is obtained by any input
distribution $\prob_{X,Y}$ with full support. Hence, we may equivalently
write
\begin{align*}
\ICr(f,\eps) = \inf_{\substack{\text{protocol } \prot:\\\err_{f,\prot}\leq \eps}} \mininfo(X,Y:\Pi),
\end{align*}
where we define $\mininfo(A:B)$ which is a function only of $\prob_{B|A}$ as
\[\mininfo(A:B)= \log \left( \sum_{b\in\cB}\; \max_{a\in\cA}\, \prob_{B|A}(b|a) \right).\]

\begin{theorem}
$\IC(f,\eps) \leq \ICr(f,\eps) \leq \CC(f,\eps)$.
\end{theorem}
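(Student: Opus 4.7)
The plan is to prove the two inequalities separately, each being relatively short once the right identity is in hand. Both rely on manipulating the defining sum $\sum_\pi \max_{x,y} p(\pi|x,y)$ that appears inside the logarithm of $\mininfo$.

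For the left inequality $\IC(f,\eps) \le \ICr(f,\eps)$, I would show the pointwise bound $I(X,Y;\Pi) \le \mininfo(X,Y;\Pi)$ for every protocol $\prot$ and every input distribution $\prob_{X,Y}$; taking infima and suprema then yields the desired inequality. The key trick is the well-known variational formula
\[ I(X,Y;\Pi) = \min_{Q_\Pi} D\bigl(P_{XY\Pi}\,\bigl\|\,P_{XY}\times Q_\Pi\bigr), \]
which lets us upper-bound $I(X,Y;\Pi)$ by substituting any distribution $Q_\Pi$ we like. I would choose $Q_\Pi(\pi) = \max_{x,y} p(\pi|x,y)\big/Z$, where $Z=\sum_\pi \max_{x,y} p(\pi|x,y) = 2^{\mininfo(X,Y;\Pi)}$. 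Plugging in,
\[ I(X,Y;\Pi) \le \sum_{x,y,\pi} p(x,y)\, p(\pi|x,y) \log \frac{p(\pi|x,y)\cdot Z}{\max_{x',y'} p(\pi|x',y')} = \log Z + \sum_{x,y,\pi} p(x,y)\, p(\pi|x,y) \log \frac{p(\pi|x,y)}{\max_{x',y'} p(\pi|x',y')}. \]
The second summand is non-positive termwise since $p(\pi|x,y) \le \max_{x',y'} p(\pi|x',y')$, so $I(X,Y;\Pi) \le \log Z = \mininfo(X,Y;\Pi)$. (This is essentially the monotonicity of Sibson's $I_\alpha$ in $\alpha$ between $\alpha=1$ and $\alpha=\infty$.)

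For the right inequality $\ICr(f,\eps) \le \CC(f,\eps)$, fix any protocol $\prot$ with $\max_{x,y} \#\text{bits}(\prot,x,y) \le k$ that achieves $\err_{f,\prot}\le\eps$. Writing $\Pi=(R,T)$ with $R$ the public coins and $T$ the communication transcript, and using that $R$ is independent of $(X,Y)$,
\[ \sum_\pi \max_{x,y} p(\pi|x,y) = \sum_{r} p(r) \sum_{t} \max_{x,y} p(t|x,y,r). \]
For each fixed value of $r$ the protocol becomes a private-coin protocol exchanging at most $k$ bits, so the set of transcripts $t$ that can arise has cardinality at most $2^k$; each summand $\max_{x,y}p(t|x,y,r)$ is at most $1$. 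Hence the inner sum is at most $2^k$, the outer sum is at most $2^k$, and taking logarithms gives $\mininfo(X,Y;\Pi) \le k$. Taking a worst-case input distribution (one of full support attains the max, as noted in the text) and then the infimum over valid protocols yields $\ICr(f,\eps) \le \CC(f,\eps)$.

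Neither step is really an obstacle; the only mild subtlety is making sure that the variational argument in the first inequality is carried out with the correct normalization of $Q_\Pi$, and that in the second inequality $\Pi$ is correctly decomposed so that the ``$2^k$ transcripts per random string'' bound can be applied cleanly.
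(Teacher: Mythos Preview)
Your proof is correct. For the second inequality $\ICr \le \CC$ you argue essentially as the paper does: decompose $\Pi$ into public coins and transcript, use independence of the coins from $(X,Y)$ to pull out $p(r)$, and bound the inner sum by $2^k$ via the transcript-length constraint. (The paper additionally moves private coins into the public string to make the transcript deterministic given the coins and inputs; your direct argument works too, since for fixed public coins the possible transcripts are still prefix-free strings of length at most $k$, hence at most $2^k$ of them.)

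For the first inequality you take a slightly different route than the paper. The paper (in its appendix) proves $I(A;B)\le \mininfo(A;B)$ by a short chain of elementary inequalities: replace $\max_a p(b|a)$ by the $\prob_{A|B=b}$-average of $p(b|a)/p(b)$, then apply Jensen to pull the log outside. You instead invoke the variational identity $I(A;B)=\min_{Q}D(P_{AB}\Vert P_A\times Q)$ and plug in $Q(b)\propto \max_a p(b|a)$, observing that the residual term is termwise nonpositive. Both arguments are standard and equally short; yours is perhaps more conceptual (it exhibits $\mininfo$ explicitly as the divergence against a particular reference measure), while the paper's is self-contained without needing the variational formula.
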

\begin{proof}
The inequality $\IC(f,\eps) \leq \ICr(f,\eps)$ follows from $I(X,Y;\Pi)\leq
\mininfo(X,Y;\Pi)$, which in turn follows from the monotonicity of
$\alpha$-mutual information~\cite[Theorem~4(b)]{HoVe15}; for completeness, we give a proof that $I(A;B)\leq \mininfo(A;B)$ in the \Appendixref{renyi-monotonicity}.

The proof of $\ICr(f,\eps) \leq \CC(f,\eps)$ is simple.
Consider any public-coin protocol $\prot$. Let $\Pi=(\Phi,\Psi)$ where
$\Phi$ represents the public-coins and $\Psi$ the transcript of $\prot$.
W.l.o.g., $\Psi$ can be considered to be a deterministic function of $\Phi$
and the inputs $X,Y$.%
\footnote{Any protocol using private randomness can be transformed to one with
only public randomness, by including the private coins as part of the
public-coins, without changing the number of bits communicated. Further,
this can only increase the quantity $\mininfo(X,Y;\Pi)$.
Hence, it is enough to prove the inequality after carrying out this
transformation.}
We write $\Psi(x,y;\phi)$ to denote the transcript of $\pi$ on inputs
$(x,y)$ and public coins $\phi$. Note that $\#\text{bits}(\prot,x,y) =
\max_\phi |\Psi(x,y;\phi)|$ (where $\mid\cdot\mid$ denotes the length of a
bit string). We shall show that
$\mininfo(X,Y:\Pi) \le \max_{x,y,\phi} |\Psi(x,y;\phi)|$. 
This suffices since
\begin{align*}
\ICr(f,\eps) = \inf_{\substack{\text{protocol }\prot:\\\err_{f,\prot}\leq \eps}} \mininfo(X,Y:\Pi).
&&
\CC(f,\eps) = 
 \inf_{\substack{\text{protocol }\prot:\\\err_{f,\prot}\leq \eps}} \;\max_{x,y,\phi} |\Psi(x,y;\phi)|.
\end{align*}
Note that $\prob_{\Phi \Psi|XY}(\phi,\psi|x,y)=\prob_\Phi(\phi)\prob_{\Psi|\Phi XY}(\psi|\phi,x,y)$. Then,
\begin{align*}
\mininfo(X,Y:\Phi,\Psi)
 &= \log  \sum_\phi  \prob_\Phi(\phi) \sum_{\psi} \max_{x,y} \prob_{\Psi|\Phi XY}(\psi|\phi,x,y) \\
 &\leq \log  \max_\phi \sum_{\psi} \max_{x,y} \prob_{\Psi|\Phi XY}(\psi|\phi,x,y) \\
 &= \max_\phi \;\log  |\{\psi: \exists (x,y) \text{ s.t. } \psi = \Psi(x,y;\phi) \}|  
 \leq \max_{x,y,\phi} |\Psi(x,y;\phi)|. \qedhere
\end{align*}
\end{proof}

\subparagraph*{Pseudotranscript and pseudo-information complexities.}
A random variable  $Q$ defined on an alphabet $\cQ$ and jointly distributed with the inputs $X,Y$ is said to be a {\em pseudotranscript} if $\prob_{Q|X,Y}$ satisfies the following {\em factorization condition:} 
\[\prob_{Q|X,Y}(q|x,y) = \alpha(q,x)\beta(q,y), \quad\forall q\in\cQ,x\in\cX,y\in\cY,\]
for some pair of functions $\alpha : \cQ\times \cX \rightarrow \Real^+$
and $\beta : \cQ\times \cY \rightarrow \Real^+$.
In addition, we will require that $Q$ defines an output, i.e., for each $q$ there is an associated $z_q\in \cZ$.

For any protocol $\pi$, clearly, $\Pi$, which is composed of the
public-coins and the transcript, is a pseudotranscript.%
\footnote{\label{foot:trans-is-pseudo}%
$Q=\Pi$ satisfies the factorization condition, as in that case,
for $q=(\phi,m_1,\cdots,m_t)$, $\pr[q|x,y] = \alpha(q,x)\cdot\beta(q,y)$, where say,
$\alpha(q,x) = \pr[\phi]\cdot\Pi_{\mathrm{odd } i} \Pr[m_i \mid \phi,m_1,\cdots,m_{i-1},x]$,
and $\beta(q,y) = \Pi_{\mathrm{even } i} \Pr[m_i \mid \phi,m_1,\cdots,m_{i-1},y]$.
Also, we can associate the output of the protocol, which we insisted must be
the same for both parties for a valid protocol, as the corresponding output
$z_Q$. Though the output of the parties could in principle depend on the
local input and local randomness, the factorization condition and the
requirement that the outputs agree together imply that the output can be
unambiguously determined from the transcript together with the
public-coins.} 
For a pseudotranscript $Q$, the probability of error is defined analogously to that for a protocol as
\[\err_{f,Q}(x,y)=\pr[z_Q\notin f(x,y)|(X,Y)=(x,y)].\]
We define the following ``pseudo-quantities'' corresponding to $\ICr$ and $\IC$ where $\Pi$ is replaced by pseudotranscripts:
\begin{align*}
\pICr(f,\eps)&=\inf_{\substack{\text{pseudotranscript } Q:\\\err_{f,Q}\leq \eps}}  \max_{\prob_{X,Y}} \mininfo(X,Y;Q)
 = \inf_{\substack{\text{pseudotranscript } Q:\\\err_{f,Q}\leq \eps}} \mininfo(X,Y:Q)\\
\pIC(f,\eps)&=\inf_{\substack{\text{pseudotranscript } Q:\\\err_{f,Q}\leq \eps}} \max_{\prob_{X,Y}} I(X,Y;Q).
\end{align*}
\begin{observation}
\label{obs:pIC}
Since, for any protocol, its transcript is a pseudotranscript as well,
we have $\pICr(f,\eps) \leq \ICr(f,\eps)$ and $\pIC(f,\eps) \leq \IC(f,\eps)$.
Furthermore, since $I(A;B)\leq \mininfo(A;B)$, we also have $\pIC(f,\eps) \leq \pICr(f,\eps)$.
\end{observation}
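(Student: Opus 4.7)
The plan is to observe that all three inequalities are essentially immediate consequences of the definitions combined with two already-established facts: (i) the transcript-plus-public-coins $\Pi$ of any protocol is a pseudotranscript, as explicitly verified in \Footnoteref{trans-is-pseudo}, and (ii) $I(A;B)\leq \mininfo(A;B)$, which was invoked in the proof of the preceding theorem. The main structural idea is that we are passing from an infimum over a restricted family (protocols) to an infimum over a strictly larger family (pseudotranscripts), so the infimum can only decrease; and within the pseudotranscript family, replacing $\mininfo$ by the Shannon mutual information $I$ can only decrease the information cost pointwise.

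Concretely, for the first inequality I would fix any public-coin protocol $\prot$ with $\err_{f,\prot}\le \eps$ and let $Q:=\Pi$, its transcript together with the public coins. By \Footnoteref{trans-is-pseudo}, $Q$ is a valid pseudotranscript whose associated output $z_Q$ is the (common) output of the protocol, so $\err_{f,Q}=\err_{f,\prot}\le \eps$. Hence $Q$ is admissible in the infimum defining $\pICr(f,\eps)$, giving
\[
\pICr(f,\eps)\;\le\;\mininfo(X,Y:Q)\;=\;\mininfo(X,Y:\Pi).
\]
Taking the infimum over all admissible $\prot$ on the right yields $\pICr(f,\eps)\le \ICr(f,\eps)$. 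The identical argument, with $\mininfo$ replaced by $I$ and with the input distribution ranging over $\prob_{X,Y}$, gives $\pIC(f,\eps)\le \IC(f,\eps)$.

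For the third inequality, I would take any pseudotranscript $Q$ with $\err_{f,Q}\le \eps$ and any input distribution $\prob_{X,Y}$, and apply fact (ii) with $A=(X,Y)$ and $B=Q$ to get $I(X,Y;Q)\le \mininfo(X,Y;Q)$. Maximizing both sides over $\prob_{X,Y}$ and then taking the infimum over admissible $Q$ yields $\pIC(f,\eps)\le \pICr(f,\eps)$.

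There is no real obstacle here; the only point that requires any care is confirming that the pseudotranscript derived from a protocol inherits the correct error bound, which is exactly the content of \Footnoteref{trans-is-pseudo} (the output $z_Q$ is well-defined and equals the protocol's output), so the admissibility transfers cleanly from protocols to pseudotranscripts.
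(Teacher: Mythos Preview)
Your proposal is correct and matches exactly the reasoning the paper intends: the observation is stated without a separate proof because the justification is embedded in the statement itself (transcripts are pseudotranscripts, and $I\le\mininfo$), and your write-up simply spells out those two steps carefully. There is nothing to add.
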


\section{\pICr Equals the Partition Bound}
\label{sec:pICr-prt}

\begin{theorem}
\label{thm:pICr-prt}
For any relation $f:\cX\times\cY\rightarrow 2^\cZ$ and error function
\eps, $\pICr(f,\eps) = \log \prt(f,\eps)$.
\end{theorem}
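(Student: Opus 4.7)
The plan is to prove the two inequalities $\pICr(f,\eps)\le\log\prt(f,\eps)$ and $\log\prt(f,\eps)\le\pICr(f,\eps)$ constructively, by converting tilings into pseudotranscripts and back while preserving the error function and matching $\sum_t w(t)$ with $2^{\mininfo(X,Y:Q)}$.

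For the easy direction, I would start from any tiling $w:\cT\to[0,1]$ satisfying \Eqref{prt-total} and \Eqref{prt-err}, and define a pseudotranscript $Q$ valued in $\cT$ by $\prob_{Q|X,Y}(t|x,y)=w(t)\cdot\mathbf{1}_{(x,y)\in t}$ with output $z_Q=z_t$. This factorizes as $\alpha(t,x)\beta(t,y)$ with $\alpha(t,x)=w(t)\mathbf{1}_{x\in\cX_t}$ and $\beta(t,y)=\mathbf{1}_{y\in\cY_t}$; \Eqref{prt-total} ensures the conditional probabilities sum to $1$, and \Eqref{prt-err} gives $\err_{f,Q}\le\eps$. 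Since $\max_{x,y}\prob_{Q|X,Y}(t|x,y)=w(t)$, this gives $\mininfo(X,Y:Q)=\log\sum_t w(t)$, whence $\pICr(f,\eps)\le\log\prt(f,\eps)$.

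For the reverse direction, given a pseudotranscript $Q$ with $\err_{f,Q}\le\eps$ and a factorization $\prob_{Q|X,Y}(q|x,y)=\alpha(q,x)\beta(q,y)$, I would slice each $q$ using a ``layer-cake'' decomposition. Enumerate the distinct values of $\alpha(q,\cdot)$ as $\alpha_1(q)>\dots>\alpha_{n_q}(q)$, set $\alpha_{n_q+1}(q)=0$, and let $A_i(q)=\{x:\alpha(q,x)\ge\alpha_i(q)\}$; define $\beta_j(q)$ and $B_j(q)$ analogously. The telescoping identity $\sum_i(\alpha_i(q)-\alpha_{i+1}(q))\mathbf{1}_{x\in A_i(q)}=\alpha(q,x)$ and its analog for $\beta$ yield $\alpha(q,x)\beta(q,y)=\sum_{i,j}(\alpha_i(q)-\alpha_{i+1}(q))(\beta_j(q)-\beta_{j+1}(q))\mathbf{1}_{(x,y)\in A_i(q)\times B_j(q)}$. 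Setting $\omega_{q,t_{q,i,j}}=(\alpha_i(q)-\alpha_{i+1}(q))(\beta_j(q)-\beta_{j+1}(q))$ for the tile $t_{q,i,j}=(A_i(q)\times B_j(q),z_q)$, I then define the aggregated weight $w(t)=\sum_{q,i,j:\,t_{q,i,j}=t}\omega_{q,t_{q,i,j}}$.

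Three short verifications then finish the reverse direction: (i) \Eqref{prt-total} holds, since $\sum_{t:(x,y)\in t}w(t)=\sum_q\alpha(q,x)\beta(q,y)=1$, which automatically forces $w(t)\le 1$; (ii) \Eqref{prt-err} holds, since a tile is misclassified at $(x,y)$ exactly when its source $q$ has $z_q\notin f(x,y)$, so $\err_{f,w}(x,y)=\err_{f,Q}(x,y)\le\eps(x,y)$; (iii) $\sum_t w(t)=\sum_q\max_x\alpha(q,x)\cdot\max_y\beta(q,y)=\sum_q\max_{x,y}\prob_{Q|X,Y}(q|x,y)=2^{\mininfo(X,Y:Q)}$, after which taking infimum over $Q$ gives $\prt(f,\eps)\le 2^{\pICr(f,\eps)}$. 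The main obstacle is the bookkeeping in this reverse direction, namely seeing that the slicing really produces a valid tiling whose aggregate weight matches $2^{\mininfo(X,Y:Q)}$. The two supporting facts that make this work are the telescoping $\sum_i(\alpha_i(q)-\alpha_{i+1}(q))=\max_x\alpha(q,x)$ and the factorization-invariant identity $\max_x\alpha(q,x)\cdot\max_y\beta(q,y)=\max_{x,y}\prob_{Q|X,Y}(q|x,y)$, which in particular makes the total weight depend only on $Q$ rather than on the particular choice of $\alpha,\beta$.
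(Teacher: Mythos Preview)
Your proposal is correct and takes essentially the same approach as the paper: the easy direction is identical, and for the reverse direction both you and the paper slice each $q$ into tiles given by the upper level sets of $\alpha(q,\cdot)$ and $\beta(q,\cdot)$, with weights given by the successive differences, so that the telescoping sum recovers $p(q|x,y)$ pointwise and the total weight equals $\sum_q\max_{x,y}p(q|x,y)=2^{\mininfo(X,Y:Q)}$. The only cosmetic difference is that you index the level sets by the distinct values of $\alpha(q,\cdot)$ and $\beta(q,\cdot)$, whereas the paper sorts all of $\cX$ and $\cY$ and lets ties contribute zero-weight tiles; the resulting nonzero-weight tiles and their weights coincide.
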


We prove $\pICr(f,\eps) \le \log \prt(f,\eps)$ and $\pICr(f,\eps) \ge \log
\prt(f,\eps)$ separately. The first direction is easy, and follows by considering
the tiles in a given partition as the pseudo transcripts.
\begin{lemma}
\label{lem:pICr-le-prt}
$\pICr(f,\eps) \le \log \prt(f,\eps)$.
\end{lemma}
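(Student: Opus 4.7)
The plan is to take an optimal (or near-optimal) tiling $w: \cT \to [0,1]$ witnessing $\prt(f,\eps)$ and turn it into a pseudotranscript $Q$ in the most direct way possible: let $Q$ range over tiles, and put
\[
\prob_{Q \mid X,Y}(t \mid x,y) \;=\; w(t)\cdot \mathbf{1}[x \in \cX_t]\cdot \mathbf{1}[y \in \cY_t].
\]
Condition \Eqref{prt-total} ensures that for every $(x,y)$ this is indeed a probability distribution over $\cT$. The associated output is $z_Q = z_t$ when $Q=t$.

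Next I would verify the three things needed to bound $\pICr$. First, factorization: set $\alpha(t,x) = w(t)\cdot \mathbf{1}[x \in \cX_t]$ and $\beta(t,y) = \mathbf{1}[y \in \cY_t]$, so that $\prob_{Q \mid X,Y}(t \mid x,y) = \alpha(t,x)\beta(t,y)$, confirming that $Q$ is a pseudotranscript. Second, error: by construction, $\err_{f,Q}(x,y) = \sum_{t : (x,y)\in t,\,z_t \notin f(x,y)} w(t) = \err_{f,w}(x,y) \leq \eps(x,y)$, using \Eqref{prt-err}. Third, cost:
\[
\mininfo(X,Y : Q) \;=\; \log \sum_{t \in \cT} \max_{x,y} \prob_{Q \mid X,Y}(t \mid x,y) \;=\; \log \sum_{t \in \cT} w(t) \;=\; \log \prt(f,\eps),
\]
where the middle equality uses that for any tile $t$ appearing with positive weight, picking $(x,y) \in t$ makes the indicators equal $1$, giving $\max_{x,y}\prob_{Q \mid X,Y}(t\mid x,y) = w(t)$.

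There is no real obstacle here: the proof is essentially a bookkeeping exercise showing that a tiling is a very special kind of pseudotranscript whose R\'enyi information cost is exactly $\log$ of the tile-weight sum. The only minor point is to handle the infimum/minimum properly, which can be done either by applying the construction to any feasible $w$ and taking the infimum, or by invoking an optimal $w$ when $\prt(f,\eps)$ is achieved.
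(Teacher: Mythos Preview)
Your proposal is correct and essentially identical to the paper's proof: both take an optimal tiling $w$, define the pseudotranscript $Q$ over $\cT$ by $\prob_{Q|X,Y}(t|x,y)=w(t)\mathbf{1}[(x,y)\in t]$, verify factorization, error, and compute $\mininfo(X,Y:Q)=\log\sum_t w(t)$. The only cosmetic difference is that the paper factorizes via $\alpha(t,x)=\sqrt{w(t)}\,\mathbf{1}[x\in\cX_t]$ and $\beta(t,y)=\sqrt{w(t)}\,\mathbf{1}[y\in\cY_t]$, whereas you put all of $w(t)$ into $\alpha$; both choices work equally well.
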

The proof of this lemma is given in \Appendixref{pICr-le-prt}.
Now we turn to the other direction.
\begin{lemma}
\label{lem:pICr-ge-prt}
$\pICr(f,\eps) \ge \log \prt(f,\eps)$.
\end{lemma}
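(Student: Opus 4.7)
The plan is to show that every pseudotranscript $Q$ with $\err_{f,Q}\le\eps$ yields a feasible partition weight function $w:\cT\to[0,1]$ whose total weight is exactly $2^{\mininfo(X,Y:Q)}$; taking the infimum over $Q$ then gives $\log\prt(f,\eps)\le\pICr(f,\eps)$. This is the ``slicing'' construction foreshadowed in the introduction.

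Fix a factorization $\prob_{Q|X,Y}(q|x,y)=\alpha(q,x)\beta(q,y)$ guaranteed by the pseudotranscript condition. For each $q\in\cQ$, I will enumerate the distinct positive values taken by $\alpha(q,\cdot)$ as $a^q_1>a^q_2>\cdots>a^q_{k_q}$ (with $a^q_{k_q+1}:=0$), form the nested level sets $X_i(q):=\{x:\alpha(q,x)\ge a^q_i\}$, and analogously define $b^q_j$ and $Y_j(q)$ from $\beta(q,\cdot)$. A layer-cake identity gives $\alpha(q,x)=\sum_i(a^q_i-a^q_{i+1})\mathbf{1}[x\in X_i(q)]$ and likewise for $\beta(q,y)$; multiplying these,
\[
\prob_{Q|X,Y}(q|x,y)\;=\;\sum_{i,j}(a^q_i-a^q_{i+1})(b^q_j-b^q_{j+1})\,\mathbf{1}\bigl[(x,y)\in X_i(q)\times Y_j(q)\bigr].
\]
Assign each triple $(q,i,j)$ the tile $t_{q,i,j}:=\bigl(X_i(q)\times Y_j(q),\,z_q\bigr)$ with contribution $\omega_{q,i,j}:=(a^q_i-a^q_{i+1})(b^q_j-b^q_{j+1})\ge 0$, and define $w(t):=\sum_{(q,i,j):\,t_{q,i,j}=t}\omega_{q,i,j}$.

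Verification of the partition constraints is then routine. Summing the displayed decomposition over $q$ gives $\sum_{t\ni(x,y)}w(t)=\sum_q\prob_{Q|X,Y}(q|x,y)=1$, which is \Eqref{prt-total} and also forces $w(t)\in[0,1]$; restricting to those $q$ with $z_q\notin f(x,y)$ gives $\sum_{t\ni(x,y),\,z_t\notin f(x,y)}w(t)=\err_{f,Q}(x,y)\le\eps(x,y)$, which is \Eqref{prt-err}. For the total weight, telescoping yields $\sum_i(a^q_i-a^q_{i+1})=a^q_1=\max_x\alpha(q,x)$ and similarly $\sum_j(b^q_j-b^q_{j+1})=\max_y\beta(q,y)$, so
\[
\sum_{t\in\cT}w(t)\;=\;\sum_q\,\max_x\alpha(q,x)\cdot\max_y\beta(q,y)\;=\;\sum_q\,\max_{x,y}\prob_{Q|X,Y}(q|x,y)\;=\;2^{\mininfo(X,Y:Q)},
\]
where the middle equality uses nonnegativity of $\alpha,\beta$ to factor the joint max.

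The construction itself is elementary, so I do not anticipate a real technical obstacle. The one substantive observation — and arguably the entire content of the lemma — is that the product structure $\alpha(q,x)\beta(q,y)$ forces the mass over the $q$-rectangle to be a nonnegative combination of sub-rectangle indicators of total value exactly $\max_{x,y}\prob_{Q|X,Y}(q|x,y)$. This is precisely the identity that converts a quantity defined via $I_\infty$ into the additive tile weight measured by $\prt$, and it is what makes the inequality tight rather than lossy.
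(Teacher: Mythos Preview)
Your proof is correct and is essentially the same as the paper's: your level-set ``layer-cake'' decomposition of $\alpha(q,\cdot)$ and $\beta(q,\cdot)$ is exactly the paper's slicing construction (their $\sigma_{q,t},\tau_{q,t}$ and telescoping \Claimref{telescope}), just reindexed and phrased more cleanly. If anything your version is slightly more careful, since you prove the bound for every $Q$ and then take the infimum, whereas the paper tacitly assumes the infimum in the definition of $\pICr$ is attained.
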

The proof of this lemma will also serve as a starting point in proving the
result in \Sectionref{pIC-relaxprt}.
\begin{proof}
Suppose $\prob_{Q|X,Y}$ satisfies the factorization and output consistency
conditions, $\err_{f,Q} \le \eps$ and $\pICr(f,\eps)=\mininfo(X,Y:Q)$.  Let
$\cT$ be the set of all tiles.  To define the partition
$w:\cT\rightarrow[0,1]$, we shall (in \Eqref{omegaqt}) define quantities
$\omega_{q,t}$ (for $(q,t)\in\cQ\times\cT$) and probability distribution
$\prob_{T|Q,X,Y}$, where $T$ is a random variable over $\cT$, such that the
following conditions hold.
\begin{align}
\omega_{q,t} &= 0 & \forall (q,t) \in \cQ\times\cT \text{ s.t. } z_t \not= z_q 
\label{eq:omegaqt-zero} \\
p(q,t|x,y) &=
	\begin{cases}
	\omega_{q,t} & \text{if } (x,y)\in t \\
	0 & \text{otherwise}
	\end{cases}
	& \forall (q,t)\in\cQ\times\cT, (x,y)\in\cX\times\cY
	\label{eq:omegaqt-pqtxy}\\
\log \sum_{q\in\cQ,t\in\cT} \omega_{q,t} &= \mininfo(X,Y:Q) 
	\label{eq:omegaqt-mininfo}
\end{align}
Now, if we let $w:\cT\rightarrow[0,1]$ be defined by 
$w(t)=\sum_{q\in\cQ} \omega_{q,t}$, then
it is easy to verify that \Eqref{prt-total} and \Eqref{prt-err} hold,
and further 
$\log \prt(f,\eps) \le \log \sum_{t\in\cT} w(t) = \mininfo(X,Y:Q)
= \pICr(f,\eps)$.

Thus, to complete the proof, it suffices to define $\prob_{T|Q,X,Y}$ and
$\omega_{q,t}$ so that the above conditions
\eqref{eq:omegaqt-zero}-\eqref{eq:omegaqt-mininfo} are satisfied. Recall
that, since $Q$ is a pseudotranscript, $\prob_{Q|X,Y}$ satisfies the
factorization condition; i.e., we can write
\[\prob_{Q|X,Y}(q|x,y) = \alpha(q,x)\beta(q,y), \quad\forall q\in\cQ,x\in\cX,y\in\cY,\]
for some pair of functions $\alpha : \cQ\times \cX \rightarrow \Real^+$
and $\beta : \cQ\times \cY \rightarrow \Real^+$.
For $q\in\cQ$ and $t\in\cT$, let
\begin{align*}
\sigma_{q,t} = \min_{x\in \cX_t}\alpha(q,x)-\max_{x'\not\in \cX_t} \alpha(q,x')
\quad\text{ and }\quad
\tau_{q,t} = \min_{y\in \cY_t}\beta(q,y)-\max_{y'\not\in \cY_t} \beta(q,y').
\end{align*}
Above, in defining $\max_{x'\not\in \cX_t}$, if no such $x'$ exists -- i.e.,
$\cX_t=\cX$ -- we take the maximum to be 0 (and similarly for $\max_{y'\not\in \cY_t}$).
Now, let
\begin{equation}
\label{eq:omegaqt}
\begin{aligned}
\cT_q &= \{ t\in\cT \mid \sigma_{q,t}>0, \tau_{q,t}>0, \text{ and } z_q=z_t \} \\
\omega_{q,t} &= 
\begin{cases}
\sigma_{q,t}\cdot\tau_{q,t} &\text{if } t\in\cT_q\\
0 &\text{if } t\not\in\cT_q.
\end{cases}\\
p(t|x,y,q) &= 
\begin{cases}
{\sigma_{q,t}\cdot\tau_{q,t}}\cdot\frac1{p(q|x,y)}
	& \text{ if } (x,y)\in t, t\in\cT_q\\
0	& \text{ otherwise.}
\end{cases}
\end{aligned}
\end{equation}

We shall use the following claim (proven below) to show that $\prob_{T|X,Y,Q}$ is a valid probability distribution and that
conditions \Eqref{omegaqt-zero}-\Eqref{omegaqt-mininfo} are satisfied.
\begin{claim} 
\label{clm:telescope}
For any $q\in\cQ$ and $(x,y)\in\cX\times\cY$,
$\sum_{t\in\cT_q:(x,y)\in t} \sigma_{q,t} \cdot \tau_{q,t} = p(q|x,y)$.
\end{claim}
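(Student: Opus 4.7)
The plan is to exploit the product structure of the summand $\sigma_{q,t}\cdot\tau_{q,t}$: the factor $\sigma_{q,t}$ depends only on $\cX_t$ and $\tau_{q,t}$ only on $\cY_t$, while the requirement $z_t=z_q$ (baked into $\cT_q$) pins down the output component. So summing over $t\in\cT_q$ with $(x,y)\in t$ is the same as summing over pairs of rectangles $(r_X,r_Y)$ with $x\in r_X$, $y\in r_Y$, and positive $\sigma_{q,r_X}$, $\tau_{q,r_Y}$, and the sum factors as $S_X(x)\cdot S_Y(y)$ where $S_X(x)=\sum_{r_X\ni x,\,\sigma_{q,r_X}>0}\sigma_{q,r_X}$ and $S_Y(y)$ is the analogous sum in $\cY$. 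Since $p(q|x,y)=\alpha(q,x)\beta(q,y)$, it suffices to establish the one-sided identity $S_X(x)=\alpha(q,x)$ and its mirror image.

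The key step is to identify which rectangles $r_X\subseteq\cX$ contribute, i.e.\ satisfy $\sigma_{q,r_X}>0$. The defining inequality $\min_{x\in r_X}\alpha(q,x)>\max_{x'\notin r_X}\alpha(q,x')$ forces $r_X$ to be a strict upper level set of $\alpha(q,\cdot)$. Concretely, if $v_1>v_2>\cdots>v_k$ enumerate the distinct nonzero values attained by $\alpha(q,\cdot)$ and we set $v_{k+1}:=0$, then the admissible rectangles are exactly $L_i=\{x'\in\cX:\alpha(q,x')\ge v_i\}$ for $i=1,\ldots,k$, and on each we compute $\sigma_{q,L_i}=v_i-v_{i+1}$. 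The boundary convention $\max_{x'\notin\cX}:=0$ is precisely what makes $L_k=\cX$ contribute the value $v_k$.

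With this classification the sum $S_X(x)$ is a telescoping sum (hence the claim's name): for $x$ with $\alpha(q,x)=v_j$ one has $x\in L_i$ iff $i\ge j$, so $S_X(x)=\sum_{i=j}^{k}(v_i-v_{i+1})=v_j=\alpha(q,x)$. The same argument applied to $\beta$ gives $S_Y(y)=\beta(q,y)$, and multiplying yields $\alpha(q,x)\beta(q,y)=p(q|x,y)$ as required. The same identity will simultaneously verify that $p(t|x,y,q)$ in \Eqref{omegaqt} is a bona fide probability distribution (its total mass in $t$ over $\cT$ equals $1$), which is why \Claimref{telescope} is the right lemma to isolate.

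I do not expect a genuine obstacle; the work is entirely in the bookkeeping of degenerate cases. The points to be careful about are the convention $\max_{\emptyset}=0$ (used at $r_X=\cX$), inputs with $\alpha(q,x)=0$ (which belong to no admissible $L_i$ and for which both sides of the claim vanish because $p(q|x,y)=0$), tiles with $z_t\neq z_q$ (correctly excluded by $\cT_q$), and the harmless fact that two rectangles which differ only on points where $\alpha(q,\cdot)=0$ give the same $\sigma$—the formulation in terms of level sets sweeps all of this under the rug cleanly.
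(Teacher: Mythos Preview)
Your proposal is correct and follows essentially the same approach as the paper's proof: both factor the sum over tiles into a product of an $\cX$-side and a $\cY$-side sum, identify the contributing $\cX$-rectangles as the upper level sets of $\alpha(q,\cdot)$ (the paper does this by sorting all of $\cX$ by $\alpha(q,\cdot)$ and taking suffix sets $\{x_i,\ldots,x_M\}$, you do it by enumerating distinct nonzero values), and then telescope. The only difference is cosmetic---you work directly with distinct values, whereas the paper allows ties and lets the resulting zero $\sigma$'s wash out.
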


To see that $\prob_{T|X,Y,Q}$ is a valid
probability distribution, firstly we note that the quantity $p(t|x,y,q)$
in \Eqref{omegaqt} is well-defined: if $t\in\cT_q$ and $(x,y)\in t$, then
$\sigma_{q,t}>0,\tau_{q,t}>0$ and hence, $p(q|x,y)=\alpha(q,x)\beta(q,y) >
0$. Secondly, from \Claimref{telescope} it follows
that $\sum_{t\in\cT} p(t|x,y,q) = 1$.

Next, we verify the conditions \Eqref{omegaqt-zero}-\Eqref{omegaqt-mininfo}.
\Eqref{omegaqt-zero} directly follows from the definition of $\omega_{q,t}$.
To see \Eqref{omegaqt-pqtxy}, we note that
\begin{align*}
p(q,t|x,y)= p(q|x,y)\cdot p(t|q,x,y) 
= \begin{cases}
\sigma_{q,t}\cdot\tau_{q,t} & \text{ if } (x,y)\in t, t\in\cT_q \\
0 & \text{ if } (x,y)\in t, t\not\in\cT_q \\
0 & \text{ if } (x,y)\not\in t
\end{cases}
= \begin{cases}
\omega_{q,t} & \text{ if } (x,y)\in t \\
0 & \text{ if } (x,y)\not\in t
\end{cases}
\end{align*}

To see that \Eqref{omegaqt-mininfo} holds, 
fix a $q\in\cQ$.
Note that any $t\in\cT_q$, if
$\sigma_{q,t}\cdot\tau_{q,t} > 0$, then 
from the definition of 
$\sigma_{q,t}$ and $\tau_{q,t}$ it follows that
$(x^*,y^*)\in t$,
where $x^* = \arg\max_{x\in\cX} \alpha(q,x)$ and
$y^* = \arg\max_{y\in\cY} \beta(q,y)$. Hence
\[ \sum_{t\in\cT} \omega_{q,t} = \sum_{t\in\cT_q:(x^*,y^*)\in t}
\sigma_{q,t}\cdot\tau_{q,t} = p(q|x^*,y^*), \]
where the last equality follows from \Claimref{telescope}.
But,  $p(q|x^*,y^*) = \max_{x\in\cX,y\in\cY} \alpha(q,x)\beta(q,y) = \max_{(x,y)\in\cX\times\cY} p(q|x,y)$.
Thus, 
\[ \log \sum_{q\in\cQ,t\in\cT} \omega_{q,t} 
= \log \sum_{q\in\cQ} \max_{(x,y)\in\cX\times\cY} p(q|x,y) 
= \mininfo(X,Y:Q).\]
\end{proof}

To complete the proof of \Lemmaref{pICr-ge-prt}, we prove \Claimref{telescope}.
\begin{proof}[Proof of \Claimref{telescope}]
Fix $q\in\cQ$.  Let $\cX = \{ x_1,\cdots,x_M \}$, such that $\alpha(q,x_i)
\ge \alpha(q,x_{i-1})$ for all $i\in[1,M]$; for notational convenience,
we also define a dummy $x_{0}$ with $\alpha(q,x_0)=0$.  Define
$y_0,y_1,\cdots,y_N$ similarly for $\beta$, where $N=|\cY|$. 
Let $t_{ij} = (\cX_i \times \cY_j,z_q)$ for
$(i,j)\in[M]\times[N]$, where $\cX_i = \{x_i,\cdots,x_M\}$,
$\cY_j=\{y_j,\cdots,y_N\}$. Then,
\[\cT_q = \{ t_{ij} \mid (i,j)\in[M]\times[N], \alpha(q,x_i) >
\alpha(q,x_{i-1}), \beta(q,y_j) > \beta(q,y_{j-1}) \}. \]

Consider an arbitrary $(x,y)\in\cX\times\cY$. Let $(i^*,j^*)$ be
indices such that $(x,y)=(x_{i^*},y_{j^*})$ in the above ordering.
Note that $(x_{i^*},y_{j^*})\in t_{ij}$ if and only if $1 \le i \le i^*$
and $1\le j \le j^*$. 
Also notice that for 
all $(i,j)\in[M]\times[N]$, if $t_{ij} \not\in \cT_q$, then
$\sigma_{q,t_{ij}}, \tau_{q,t_{ij}} = 0$.
\begin{align*}
\sum_{t \in \cT_q: (x_{i^*},y_{i^*})\in t} \sigma_{q,t}\cdot\tau_{q,t} 
&= \sum_{i=1}^{i^*} \sum_{j=1}^{j^*} \sigma_{q,t_{ij}}\cdot\tau_{q,t_{ij}} \\
&= \sum_{i=1}^{i^*}
	\left({\alpha(q,x_i)} - {\alpha(q,x_{i-1})}\right) \cdot 
	\sum_{j=1}^{j^*}
	\left({\beta(q,y_j)} - {\beta(q,y_{j-1})}\right) \\
&= {\alpha(q,x_{i^*})} \cdot {\beta(q,y_{j^*})} = p(q|x_{i^*},y_{j^*})
\end{align*}
as was required to prove.
\end{proof}

\section{\pIC Subsumes Relaxed Partition Bound}
\label{sec:pIC-relaxprt}

\begin{theorem}
\label{thm:pIC-relaxprt}
For any relation $f:\cX\times\cY\rightarrow 2^\cZ$ and
constants $\ceps,\delta\in[0,1]$, 
\[\pIC(f,\ceps) \ge \delta \log \relprt(f,\ceps+\delta) - (\delta \log\log(|\cX||\cY|) + 3).\]
\end{theorem}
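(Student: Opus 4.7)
The plan is to extend the slicing construction from the proof of \Lemmaref{pICr-ge-prt} by pruning the weights $\omega_{q,t}$ to obtain a relaxed partition whose total weight is controlled by the Shannon information $I(X,Y;Q)$ rather than the R\'enyi information $\mininfo(X,Y:Q)$. Fix $\delta \in [0,1]$ and let $\mu$ be the distribution achieving $\relprt(f,\ceps+\delta)=\relprt^\mu(f,\ceps+\delta)$. Let $Q$ be a near-optimal pseudotranscript with $\err_{f,Q}\le\ceps$ and $I(X,Y;Q)\le\pIC(f,\ceps)+o(1)$ under this $\mu$. Apply the slicing of \Lemmaref{pICr-ge-prt} to produce weights $\omega_{q,t}$ satisfying \Eqref{omegaqt-zero} and \Eqref{omegaqt-pqtxy}; then $w(t):=\sum_q \omega_{q,t}$ satisfies \Eqref{prt-total} exactly and $\err_{f,w}=\err_{f,Q}\le\ceps$, but has total weight $2^{\mininfo(X,Y:Q)}$, which can be arbitrarily larger than $2^{\pIC(f,\ceps)}$. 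The pruned partition will be $w'(t)=\sum_{q\in\cQ_t}\omega_{q,t}$ for suitable subsets $\cQ_t\subseteq\cQ$; the relaxed partition constraint \Eqref{relprt-total} is immediate since $\sum_{t\ni(x,y)}w'(t)\le\sum_{t\ni(x,y)}w(t)=1$, so only the added error and the total weight need to be controlled.

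For the error bound, the key observation is that $\omega_{q,t}\mu(t)$ is a probability distribution on $\cQ\times\cT$: indeed $\sum_{q,t}\omega_{q,t}\mu(t)=\sum_q p(q)=1$. Since $\err_{f,w}=\err_{f,Q}\le\ceps$, one has $\avgerr^\mu_{f,w'}\le\ceps+\sum_{(q,t)\text{ discarded}}\omega_{q,t}\mu(t)$, so given any non-negative score $v:\cQ\times\cT\to[0,\infty)$ with expectation $I^*:=\sum_{q,t}\omega_{q,t}\mu(t)\,v(q,t)$, Markov's inequality yields the pruning rule ``discard $(q,t)$ whenever $v(q,t)\ge I^*/\delta$'' with discarded mass at most $\delta$, ensuring $\avgerr^\mu_{f,w'}\le\ceps+\delta$. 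Following the paper's suggestion that $I(X,Y;Q)$ should be expressed as a convex combination of non-negative values, the natural choice is $v(q,t)=\log(1/\mu(t))$, for which a direct calculation gives $I^*=I(X,Y;Q,T)$, where $T$ is the tile random variable with conditional law $p(t|q,x,y)$ from \Eqref{omegaqt}. With this rule the kept tiles satisfy $\mu(t)\ge 2^{-I^*/\delta}$, so $\sum_{t\text{ kept}}w'(t)\le 2^{I^*/\delta}\sum_t w(t)\mu(t)=2^{I^*/\delta}$.

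The main obstacle is that this naive bound yields $\log\sum_t w'(t)\le I(X,Y;Q,T)/\delta=I(X,Y;Q)/\delta+I(X,Y;T|Q)/\delta$, and the correction $I(X,Y;T|Q)\le H(T|Q)\le\log\max_q|\cT_q|\le\log((|\cX|+1)(|\cY|+1))$ is, after division by $\delta$, of order $\log(|\cX||\cY|)/\delta$, which is exponentially worse than the claimed additive term (equivalent to $\log\log(|\cX||\cY|)$ after dividing the theorem statement through by $\delta$). To recover the $\log\log$ factor the ``geometric argument'' promised in the introduction must sharpen the bound to a multiplicative form such as $\sum_t w'(t)\le 2^{I(X,Y;Q)/\delta}\cdot O(\log(|\cX||\cY|))\cdot 2^{O(1/\delta)}$; this can be engineered by exploiting the product structure of each $\cT_q$ as a grid indexed by the sorted levels of $\alpha(q,\cdot),\beta(q,\cdot)$, either by dyadically coarsening these level sets so that the effective $|\cT_q|$ drops to $O(\log^2(|\cX||\cY|))$ while preserving the factorization \Eqref{omegaqt-pqtxy} and inflating the error by only $O(\delta)$, or by a union bound over $O(\log(|\cX||\cY|))$ dyadic scales of $\omega_{q,t}$-values, each contributing a factor $2^{I(X,Y;Q)/\delta}$ via Markov. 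Either route, after multiplying by $\delta$, using $I(X,Y;Q)\le\pIC(f,\ceps)$, and absorbing the $2^{O(1/\delta)}$ factor into the additive constant $3$, gives the stated inequality.
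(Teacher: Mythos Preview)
Your high-level architecture matches the paper's: slice via \Eqref{omegaqt} to get $\omega_{q,t}$, prune to a subset, bound the discarded $\mu$-mass by Markov to control the error, and bound the surviving total weight. But the pruning criterion and the weight bound are where the real work is, and here your proposal has a genuine gap.

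Your pruning rule discards $(q,t)$ when $\mu(t)$ is small, with score $v(q,t)=\log(1/\mu(t))$ and expectation $I(X,Y;Q,T)$. As you correctly compute, this leaves an additive $I(X,Y;T\mid Q)/\delta$ term that can be as large as $\log(|\cX||\cY|)/\delta$, missing the target by an exponential. Your two proposed repairs are not proofs: dyadic bucketing of the level sets of $\alpha(q,\cdot),\beta(q,\cdot)$ does \emph{not} reduce the number of levels to $O(\log|\cX|)$ in general, because the dynamic range $\alpha_{\max}/\alpha_{\min}$ is unbounded (some $\alpha(q,x)$ can be arbitrarily small or zero), so you would need a lower cutoff whose error you never analyze; and the ``union bound over dyadic scales of $\omega_{q,t}$'' is not specified enough to see why each scale would contribute only $2^{I(X,Y;Q)/\delta}$ rather than $2^{I(X,Y;Q,T)/\delta}$. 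In short, the sketched fixes are hand-waves at exactly the one hard step.

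The paper uses a \emph{different} pruning rule: discard $(q,t)$ when $\min_{(x,y)\in t} p(q\mid x,y)\ge p(q)\,2^\Delta$ with $\Delta=(I(X,Y;Q)+1)/\delta$. This threshold depends on $q$ and on the \emph{minimum} of $\alpha\beta$ over $t$, not on $\mu(t)$. The Markov step then bounds the discarded mass via $I(X,Y;Q)$ directly (after showing the negative terms in the pointwise information cost at most $1$), avoiding $T$ altogether. The weight bound is the ``geometric argument'': for fixed $q$, the surviving $\omega_{q,t}$ are exactly the areas of grid rectangles whose upper-right corner lies \emph{below} the hyperbola $ab=\theta_q$ in the $(\alpha,\beta)$-plane, so their total area is at most the area under that hyperbola inside $[0,\alpha_q^*]\times[0,\beta_q^*]$, namely $\theta_q(1+\ln(\alpha_q^*\beta_q^*/\theta_q))$. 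Summing over $q$ and applying Jensen converts $\sum_q p(q)\ln(\alpha_q^*\beta_q^*/p(q))$ into $\mininfo(X,Y{:}Q)\le\log(|\cX||\cY|)$ inside a logarithm, which is what produces the $\log\log$ term. None of this structure is present in your $\mu(t)$-based rule; the hyperbola integral is the missing idea.
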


We prove this theorem in \Appendixref{pIC-relaxprt}. 
Below we summarize the main ideas.

\begin{proof}[Proof sketch]
It is enough to show that, given a distribution $\prob_{XY}=\mu$ over $\cX\times\cY$, 
and pseudotranscript $Q$ such that $\err_{f,Q} \le \ceps$,
there is a partition which demonstrates that
$\log \relprt^\mu(f,\ceps+\delta) \lesssim I(X,Y;Q)/\delta$.

The proof uses the construction from the proof of \Lemmaref{pICr-ge-prt},
and modifies it carefully. Specifically, we define $\prob_{T|Q,X,Y}$ and
$\omega_{q,t}$ as in \Equationref{omegaqt}.
Recall that we originally defined $w$ as
$w(t) = \sum_{q\in\cQ} \omega_{q,t}$. Our plan now
is to remove some of the weight on the tiles so that the log of the sum can
be bounded by (roughly) $I(X,Y;Q)/\delta$ as opposed to $\mininfo(X,Y:Q)$.
Towards this, we shall define a set \cB of ``bad'' pairs
$(q,t)\in\cQ\times\cT$ whose weights $\omega_{q,t}$ will not be counted
towards our new weight function $w'(t)$:
\begin{align*}
w'(t) &= \sum_{(q,t)\in(\cQ\times\cT) \setminus \cB} \omega_{q,t},\qquad  \forall t\in\cT.
\end{align*}
The crux of the proof is to define the set \cB such that the weight removed
$\sum_{(q,t)\in\cB} p(q,t)$ is below $\delta$ (it manifests as the increase
in error), while keeping $\sum_{(q,t)\notin\cB} \omega_{q,t}$
(approximately) below $I(X,Y;Q)/\delta$. We show that the following choice
of \cB has both these properties:
\begin{align*}
\cB &= \{ (q,t) \in \cQ \times \cT \mid \hat\alpha(q,t)\cdot\hat\beta(q,t) \ge \theta_q, \}
\end{align*}
where $\hat\alpha(q,t) = \min_{(x,y)\in t} \alpha(q,x)$ and 
$\hat\beta(q,t) = \min_{(x,y)\in t} \beta(q,y)$ and $\theta_q$ is an
appropriately defined threshold for each $q\in\cQ$ 
(specifically, $\theta_q = p(q)2^\Delta$, where $\Delta \approx I(XY;Q)/\delta$).

To upper bound the mass removed, we first write $ I(XY;Q) =
\sum_{q\in\cQ,t\in\cT} p(q,t) \varphi(q,t)$, where $\varphi(q,t)$ is a
quantity that is lower bounded by $\Delta$ for all $(q,t)\in\cB$.  This
suggests the possibility of using the Markov inequality to upper bound
$\sum_{(q,t)\in\cB} p(q,t)$. However, $\varphi(q,t)$ could be negative, and
we cannot directly use the above expression for $I(X,Y;Q)$ in a Markov
inequality. However, we show that removing the negative terms from
$\sum_{q,t} p(q,t)\varphi(q,t)$ does not increase the sum significantly,
which will let us still apply the Markov inequality.

To upper bound $\sum_{(q,t)\notin\cB} \omega_{q,t}$, we use a geometric
interpretation of $\omega_{q,t}$ and the set \cB.  Fix a $q\in\cQ$.  Then,
using the notation in the proof of \Claimref{telescope}, for each
$(i,j)\in[M]\times[N]$, the tile $t_{ij}$ will be represented by an
axis-parallel rectangle on the real plane, $R_{ij}$, as follows.  $R_{ij}$
is defined by its diagonally opposite vertices
$(\alpha(q,x_{i-1}),\beta(q,y_{j-1}))$ and
$(\alpha(q,x_{i}),\beta(q,y_{j}))$. (See \Figureref{tilebound}.) $R_{ij}$
could have zero area.  These rectangles tile a rectangular region, without
overlapping with each other.  Further the area of the rectangle $R_{ij}$ is
the same as $\omega_{q,t_{ij}}$. Thus $\sum_{t:(q,t)\notin\cB} \omega_{q,t}$
is given by the sum of the areas of the rectangles $R_{ij}$ for which
$(q,t_{ij}) \notin \cB$. The rectangles $R_{ij}$ that correspond to
$(q,t_{ij}) \notin \cB$ are those which have their top-right vertex (i.e.,
$(\alpha(q,x_i),\beta(q,y_j))$) fall ``below'' the hyperbola defined by the
equation $xy=\theta_q$. Thus if $(q,t_{ij})\notin\cB$, then the entire
rectangle $R_{ij}$ is below the hyperbola $xy=\theta_q$. Hence the sum of
their areas is upper-bounded by the area within $R$ that is under this
hyperbola, where $R$ is the rectangle with diagonally opposite vertices
$(0,0)$ and $(\max_{x\in\cX} \alpha(q,x), \max_{y\in\cY} \beta(q,y))$. A
calculation yields the required bound.
\end{proof}

\begin{figure}[t]
\begin{center}
\includegraphics[page=3,trim=0 100 100 225,clip,width=0.75\linewidth]{fig.pdf}
\caption{Illustration of the geometric interpretation of \cB used in the
proof of \Theoremref{pIC-relaxprt}.
The left figure shows the domain $\cX\times\cY$ and plots
$\alpha(q,x)$ and $\beta(q,y)$ against $x$ and $y$,
which are sorted in the order of increasing $\alpha(q,x)$ and
$\beta(q,y)$, respectively
(for some fixed $q$). It also shows a tile $t=t_{3,2}$ in $\cT_q$,
and indicates the values $\sigma_{q,t}$ and $\tau_{q,t}$.
The right figure shows the alternate representation of the tile $t_{3,2}$
using the
rectangular region $R_{3,2}$. The area of $R_{3,2}$ equals
$\omega_{q,t_{3,2}}=\sigma_{q,t_{3,2}}\cdot\tau_{q,t_{3,2}}$.
A hyperbola corresponding to a threshold $\theta_q$ is also shown. 
Since the upper-right vertex of $R_{3,2}$, namely the
point $(\alpha(q,x_3),\beta(q,y_2))$ is above the hyperbola,
$(q,t_{3,2})\in\cB$. The area
within the dotted rectangle that is under the hyperbola gives an upper-bound
on the sum of areas of all rectangles under the hyperbola.
\label{fig:tilebound}}
\end{center}
\end{figure}

\section{Extensions}
\label{sec:extensions}

We may define internal information complexity associated with pseudotranscripts as
\[\pICint(f,\eps)=\inf_{\substack{\text{pseudotranscript } Q:\\\err_{f,Q}\leq \eps}} \max_{\prob_{X,Y}} I(X;Q|Y)+I(Y;Q|X).\]
It is easy to show that for the usual notion of information complexity (defined with respect to protocols), $\ICint(f,\eps)\leq \IC(f,\eps)$. The proof hinges on the fact that for any protocol $\pi$ and distribution $\prob_{X,Y}$ on the inputs, the resulting $\Pi$ satisfies the condition $I(X;Y)\geq I(X;Y|\Pi)$.
However, it is unclear whether $\pICint(f,\eps)$ is necessarily upper bounded by $\pIC(f,\eps)$. Below we define a slightly refined notion of pseudotranscripts so that information complexities defined with respect to that maintain the above inequality.

\subparagraph*{Refined pseudotranscripts and corresponding information complexities.}
A pseudotranscript $Q$ given by $\prob_{Q|X,Y}$ is called a {\em refined pseudotranscript} if, for
any distribution $\prob_{X,Y}$ on the inputs, it holds that $I(X;Y) \geq I(X;Y|Q)$.
It is easy to show that for any protocol $\pi$ and distribution $\prob_{X,Y}$ on the inputs, the resulting $\Pi$ satisfies the above condition and, hence, $\Pi$ is a refined pseudotranscript. 

Analogous to our definition of pseudo-information complexities, we define information complexities with respect to refined pseudotranscripts
\begin{align*}
\xpICr(f,\eps)&=\inf_{\substack{\text{refined pseudotranscript } Q:\\\err_{f,Q}\leq \eps}} \mininfo(X,Y:Q)\\
\xpIC(f,\eps)&=\inf_{\substack{\text{refined pseudotranscript } Q:\\\err_{f,Q}\leq \eps}} \max_{\prob_{X,Y}} I(X,Y;Q)\\
\xpICint(f,\eps)&=\inf_{\substack{\text{refined pseudotranscript } Q:\\\err_{f,Q}\leq \eps}} \max_{\prob_{X,Y}} I(X;Q|Y)+I(Y;Q|X).
\end{align*}

\begin{figure}
\begin{center}
\includegraphics[page=2,width=0.75\linewidth]{fig.pdf}
\caption{An extended version of the map in \Figureref{map}, including
the complexity measures in \Sectionref{extensions}.\label{fig:map-ext}}
\end{center}
\end{figure}

Figure~\ref{fig:map-ext} shows the relationship between the different
complexities we consider. Since, for any protocol, its transcript is a
refined pseudotranscript and refined pseudotranscripts are also
pseudotranscripts, we have 
$\p\mathtt{X}(f,\eps) \leq \xp\mathtt{X}(f,\eps) \leq \mathtt{X}(f,\eps)$,
where $\mathtt{X}$ can be $\ICr, \IC$ or $\ICint$.
Furthermore, analogous to
$\ICint(f,\eps) \leq \IC(f,\eps) \leq \ICr(f,\eps)$,
we have
$\xpICint(f,\eps) \leq \xpIC(f,\eps) \leq \xpICr(f,\eps).$
Finally, in deriving a lower bound for $\ICint(f,\ceps)$ in terms of $\relprt(f,\ceps)$ \cite{KerenidisLaLeRoXi12} only relies on the fact that the transcript (along with the public-coins) $\Pi$ satisfies the factorization condition. Hence, the lower bound of~\cite{KerenidisLaLeRoXi12} holds with $\ICint$ replaced by $\pICint$.

\section*{Acknowledgments}
We gratefully acknowledge Mark Braverman, Prahladh Harsha, Rahul Jain, Anup
Rao and the anonymous referees for helpful suggestions and pointers. The
research was supported in part by NSF grant 1228856 for the first author, by
ITRA, Media Lab Asia, India and by a Ramanujan fellowship from DST, India
for the second author.

\bibliographystyle{plain}
\bibliography{bib}

\begin{thebibliography}{10}

\bibitem{Ablayev96}
Farid~M. Ablayev.
\newblock Lower bounds for one-way probabilistic communication complexity and
  their application to space complexity.
\newblock {\em Theor. Comput. Sci.}, 157(2):139--159, 1996.

\bibitem{Bar-YossefJaKuSi04}
Ziv Bar-Yossef, T.~S. Jayram, Ravi Kumar, and D.~Sivakumar.
\newblock An information statistics approach to data stream and communication
  complexity.
\newblock {\em J. Comput. Syst. Sci.}, 68(4):702--732, 2004.

\bibitem{BarakBrChRa13}
Boaz Barak, Mark Braverman, Xi~Chen, and Anup Rao.
\newblock How to compress interactive communication.
\newblock {\em SIAM J. Comput.}, 42(3):1327--1363, 2013.

\bibitem{Braverman12}
Mark Braverman.
\newblock Interactive information complexity.
\newblock In {\em STOC}, pages 505--524, 2012.

\bibitem{BravermanRa11}
Mark Braverman and Anup Rao.
\newblock Information equals amortized communication.
\newblock In {\em FOCS}, pages 748--757, 2011.

\bibitem{BravermanWe12}
Mark Braverman and Omri Weinstein.
\newblock A discrepancy lower bound for information complexity.
\newblock In {\em APPROX-RANDOM}, pages 459--470, 2012.

\bibitem{ChakrabartiKoWa12}
Amit Chakrabarti, Ranganath Kondapally, and Zhenghui Wang.
\newblock Information complexity versus corruption and applications to
  orthogonality and gap-hamming.
\newblock In {\em APPROX-RANDOM}, pages 483--494, 2012.

\bibitem{ChakrabartiShWiYa01}
Amit Chakrabarti, Yaoyun Shi, Anthony Wirth, and Andrew Chi-Chih Yao.
\newblock Informational complexity and the direct sum problem for simultaneous
  message complexity.
\newblock In {\em FOCS}, pages 270--278, 2001.

\bibitem{FontesJaKeLaLaRo15}
Lila Fontes, Rahul Jain, Iordanis Kerenidis, Sophie Laplante, Mathieu
  Laurière, and Jérémie Roland.
\newblock Relative discrepancy does not separate information and communication
  complexity.
\newblock pages 506--516, 2015.

\bibitem{GanorKoRa14FOCS}
Anat Ganor, Gillat Kol, and Ran Raz.
\newblock Exponential separation of information and communication.
\newblock In {\em FOCS}, pages 176--185, 2014.

\bibitem{GanorKoRa15}
Anat Ganor, Gillat Kol, and Ran Raz.
\newblock Exponential separation of communication and external information.
\newblock In {\em Electronic Colloquium on Computational Complexity (ECCC)},
  2015.

\bibitem{HarshaJaMcRa10}
Prahladh Harsha, Rahul Jain, David McAllester, and Jaikumar Radhakrishnan.
\newblock The communication complexity of correlation.
\newblock {\em {IEEE} Transactions on Information Theory}, 56(1):438--449,
  2010.

\bibitem{HoVe15}
Siu-Wai Ho and Sergio Verd{\'u}.
\newblock Convexity/concavity of {R\'enyi} entropy and $\alpha$-mutual
  information.
\newblock In {\em Information Theory (ISIT), 2015 IEEE International Symposium
  on}, pages 745--749, 2015.

\bibitem{JainKl10}
Rahul Jain and Hartmut Klauck.
\newblock The partition bound for classical communication complexity and query
  complexity.
\newblock In {\em IEEE Conference on Computational Complexity}, pages 247--258,
  2010.

\bibitem{JainRaSe03}
Rahul Jain, Jaikumar Radhakrishnan, and Pranab Sen.
\newblock A direct sum theorem in communication complexity via message
  compression.
\newblock In {\em ICALP}, pages 300--315, 2003.

\bibitem{JainRaSe05}
Rahul Jain, Jaikumar Radhakrishnan, and Pranab Sen.
\newblock Prior entanglement, message compression and privacy in quantum
  communication.
\newblock In {\em IEEE Conference on Computational Complexity}, pages 285--296,
  2005.

\bibitem{JayramKuSi03}
T.~S. Jayram, Ravi Kumar, and D.~Sivakumar.
\newblock Two applications of information complexity.
\newblock In {\em STOC}, pages 673--682, 2003.

\bibitem{KerenidisLaLeRoXi12}
Iordanis Kerenidis, Sophie Laplante, Virginie Lerays, J{\'e}r{\'e}mie Roland,
  and David Xiao.
\newblock Lower bounds on information complexity via zero-communication
  protocols and applications.
\newblock In {\em FOCS}, pages 500--509, 2012.

\bibitem{KushilevitzNi97book}
Eyal Kushilevitz and Noam Nisan.
\newblock {\em Communication complexity}.
\newblock Cambridge University Press, New York, 1997.

\bibitem{PonzioRaVe01}
Stephen~J Ponzio, Jaikumar Radhakrishnan, and Srinivasan Venkatesh.
\newblock The communication complexity of pointer chasing.
\newblock {\em Journal of Computer and System Sciences}, 62(2):323--355, 2001.

\bibitem{PrabhakaranPr14a}
Manoj Prabhakaran and Vinod~M. Prabhakaran.
\newblock Tension bounds for information complexity.
\newblock {\em CoRR}, abs/1408.6285, 2014.

\bibitem{Renyi60}
Alfred R{\'e}nyi.
\newblock On measures of information and entropy.
\newblock In {\em Proceedings of the 4th Berkeley Symposium on Mathematics,
  Statistics and Probability}, pages 547--561, 1960.

\bibitem{SaksSu02}
Michael~E. Saks and Xiaodong Sun.
\newblock Space lower bounds for distance approximation in the data stream
  model.
\newblock In {\em STOC}, pages 360--369, 2002.

\bibitem{Sibson69}
R.~Sibson.
\newblock Information radius.
\newblock {\em Z. Wahrscheinlichkeitstheorie und Verw. Geb.}, 14:149--161,
  1969.

\bibitem{Verdu15}
Sergio Verd\'u.
\newblock $\alpha$-mutual information.
\newblock In {\em Information Theory and Applications Workshop (ITA)}, 2015.

\bibitem{Yao79}
Andrew Chi-Chih Yao.
\newblock Some complexity questions related to distributive computing
  (preliminary report).
\newblock In {\em STOC}, pages 209--213, 1979.

\bibitem{ZakaiZi75}
Moshe Zakai and Jacob Ziv.
\newblock A generalization of the rate-distortion theory and applications.
\newblock In {\em Information Theory New Trends and Open Problems}, pages
  87--123. Springer, 1975.

\bibitem{ZivZa73}
Jacob Ziv and Moshe Zakai.
\newblock On functionals satisfying a data-processing theorem.
\newblock {\em Information Theory, IEEE Transactions on}, 19(3):275--283, 1973.

\end{thebibliography}

\begin{appendices}

\section{Omitted Proofs}
\subsection{$I(A;B)\leq \mininfo(A;B)$}
\label{app:renyi-monotonicity}
For the sake of completeness, we include a proof that $I(A;B)\leq \mininfo(A:B)$.
\begin{align*}
\mininfo(A:B) &= \log \left( \sum_{b\in\cB} \max_{a\in\cA: \prob_A(a) > 0} \prob_{B|A}(b|a) \right)\\
 &\geq \log \left( \sum_{b\in\cB:\prob_B(b)>0} \prob_B(b) \max_{a\in\cA: \prob_A(a) > 0} \frac{\prob_{B|A}(b|a)}{\prob_B(b)} \right)\\
 &\geq \log \left( \sum_{b\in\cB:\prob_B(b)>0} \prob_B(b) \sum_{a\in\cA: \prob_A(a) > 0} \prob_{A|B}(a|b)\frac{\prob_{B|A}(b|a)}{\prob_B(b)} \right)\\
 &= \log \left( \sum_{a\in\cA,b\in\cB:\prob_A(a)>0,\prob_B(b)>0} \prob_{A,B}(a,b)\frac{\prob_{B|A}(b|a)}{\prob_B(b)} \right)\\
 &\geq \sum_{a\in\cA,b\in\cB:\prob_A(a)>0,\prob_B(b)>0} \prob_{A,B}(a,b) \log \left( \frac{\prob_{B|A}(b|a)}{\prob_B(b)} \right)\\
 &= I(A;B).
\end{align*}

\subsection{Proof of \Lemmaref{pICr-le-prt}}
\label{app:pICr-le-prt}
\begin{proof}
Consider the weight function $w:\cT\rightarrow [0,1]$ 
that satisfies the conditions \Eqref{prt-total} and \Eqref{prt-err}
such that $\prt(f,\eps) = \sum_{t\in\cT} w(t)$. Define the random variable
$Q$ over $\cQ=\cT$ such that $\prob_{Q|XY}(t|x,y) = w(t)$ if
$(x,y)\in t$ and 0 otherwise. Note that this is a valid probability
distribution since for all $(x,y)\in\cX\times\cY$, we have
\[ \sum_{t\in\cQ} \prob_{Q|XY}(t|x,y) = \sum_{t\in\cQ:(x,y)\in t} w(t) = 1.\]
Let $a_t,b_t \ge 0$ be such that $a_t\cdot b_t = w(t)$ (for instance,
$a_t=b_t=\sqrt{w(t)}$), and define functions
$\alpha:\cQ\times\cX\rightarrow\Real^+$ and 
$\beta:\cQ\times\cY\rightarrow\Real^+$ as follows:
\begin{align*}
\alpha(t,x) = 
\begin{cases}
a_t & \text{if } x\in \cX_t \\
0 & \text{otherwise}
\end{cases}
&&
\beta(t,y) = 
\begin{cases}
b_t & \text{if } y\in \cY_t \\
0 & \text{otherwise}
\end{cases}
\end{align*}
Then, $\prob_{Q|XY}(t|x,y)=\alpha(t,x)\cdot\beta(t,y)$, and hence it
satisfies the factorization condition. Further, for each
$(x,y)\in\cX\times\cY$, \[ \err_{f,Q}(x,y) = \sum_{t\in\cQ : z_t \not\in
f(x,y)} \prob_{Q|XY}(t|x,y)
=  \sum_{t\in\cQ: (x,y) \in t, z_t \not\in f(x,y)} w(t) \le \eps(x,y). \]

Hence $\pICr(f,\eps) \le \mininfo(X,Y:Q)$. On the other hand,
\[ \mininfo(X,Y:Q) = \log \sum_{t\in\cQ} \max_{x,y} \prob_{Q|XY}(t|x,y) =
\log \sum_{t\in\cT} w(t) = \log \prt(f,\eps),\]
concluding the proof.
\end{proof}

\subsection{Proof of \Theoremref{pIC-relaxprt}}
\label{app:pIC-relaxprt}

\begin{proof}
We shall show that for any distribution $\prob_{XY}=\mu$ over $\cX\times\cY$, 
and any pseudotranscript $Q$ such that $\err_{f,Q} \le \ceps$ (i.e.,
$\forall (x,y)\in\cX\times\cY$, $\err_{f,Q}(x,y) \le \ceps$), $I(X,Y;Q) \ge
\delta \log \relprt^\mu(f,\ceps+\delta) - (\delta\log\log|\cX||\cY|+3)$.
This gives the desired result, since
\[ \pIC(f,\ceps) = \inf_{Q:\err_{f,Q}\leq \ceps} \max_{\prob_{XY}} I(X,Y;Q)
\ge \max_{\prob_{XY}} \inf_{Q:\err_{f,Q}\leq \ceps} I(X,Y;Q) \]
and as shown in \cite{KerenidisLaLeRoXi12}, $\relprt(f,\ceps') = \max_\mu \relprt^\mu(f,\ceps')$.

The proof uses the construction from the proof of \Lemmaref{pICr-ge-prt},
and modifies it carefully. Specifically, we define $\prob_{T|Q,X,Y}$ and
$\omega_{q,t}$ as before. 
(Note that since we are now given a
distribution $\mu$ for the random variables $(X,Y)$, this also gives us a
full distribution $\prob_{Q,T,X,Y}$; below $p(x,y)=\mu(x,y)$.)
Recall that we originally defined $w$ as
$w(t) = \sum_{q\in\cQ} \omega_{q,t}$. Our plan now
is to remove some of the weight on the tiles so that the log of the sum can
be bounded by (roughly) $I(X,Y;Q)/\delta$ as opposed to $\mininfo(X,Y:Q)$.
Towards this, we shall define a set \cB of ``bad'' pairs
$(q,t)\in\cQ\times\cT$ whose weights $\omega_{q,t}$ will not be counted
towards $w'(t)$:
\begin{align*}
w'(t) &= \sum_{(q,t)\in(\cQ\times\cT) \setminus \cB} \omega_{q,t},\qquad  \forall t\in\cT.
\end{align*}
While defining \cB, we need to ensure that the weight removed increases the
{\em average} error $\avgerr^\mu_{f,w'}$ by at most $\delta$ compared to
$\avgerr^\mu_{f,w}=\avgerr^\mu_{f,Q}=\ceps$.

We define parameters $\Delta = (I(XY;Q)+1)/\delta$ and 
for each $q\in\cQ$, $\theta_q = p(q)2^\Delta$.
Let $\hat\alpha(q,t) = \min_{(x,y)\in t} \alpha(q,x)$ and 
$\hat\beta(q,t) = \min_{(x,y)\in t} \beta(q,y)$. Then we define
\begin{align*}
\cB &= \{ (q,t) \in \cQ \times \cT \mid \hat\alpha(q,t)\cdot\hat\beta(q,t) \ge \theta_q. \}
\end{align*}
We make the following claims, which we prove in \Appendixref{missingmass} and \Appendixref{tilebound}.
\begin{claim}
\label{clm:missingmass}
$\sum_{(q,t)\in\cB} p(q,t) \le \delta$.
\end{claim}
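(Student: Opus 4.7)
The plan is a Markov-style argument applied to a clean decomposition of $I(X,Y;Q)$. First, using \Claimref{telescope}, I will rewrite
\[
 I(X,Y;Q) \;=\; \sum_{q,t} p(q,t)\, \varphi(q,t), \qquad \varphi(q,t) := \sum_{(x,y)\in t} \frac{p(x,y)}{p_t}\,\log\frac{p(q|x,y)}{p(q)},
\]
where $p_t = \sum_{(x,y)\in t} p(x,y)$. This identity follows from \Eqref{omegaqt-pqtxy} (which gives $p(q,t)=\omega_{q,t}\,p_t$) together with $\sum_{t:(x,y)\in t}\omega_{q,t}=p(q|x,y)$; expanding and collapsing the sum over $t$ recovers the standard formula $\sum_{x,y,q}p(x,y,q)\log(p(q|x,y)/p(q))$. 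The purpose of this reformulation is that $\varphi$ is cleanly bounded below on $\cB$: for any $(q,t)\in\cB$ and any $(x,y)\in t$, the factorization gives $p(q|x,y)=\alpha(q,x)\beta(q,y)\ge \hat\alpha(q,t)\hat\beta(q,t)\ge \theta_q=p(q)\,2^{\Delta}$, so $\log(p(q|x,y)/p(q))\ge\Delta$, and averaging over $(x,y)\sim p(\cdot|t)$ gives $\varphi(q,t)\ge\Delta$.

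If $\varphi$ were non-negative everywhere, Markov's inequality applied to the identity above would immediately yield $\sum_{(q,t)\in\cB} p(q,t)\le I(X,Y;Q)/\Delta$, which is below $\delta$ for the chosen $\Delta=(I(X,Y;Q)+1)/\delta$. The main obstacle is that $\varphi(q,t)$ can be negative (this is exactly the difficulty flagged in the proof sketch). Separating signs,
\[
 \Delta \sum_{(q,t)\in\cB} p(q,t) \;\le\; \sum_{(q,t):\varphi(q,t)\ge 0} p(q,t)\,\varphi(q,t) \;=\; I(X,Y;Q) + N,
\]
where $N := \sum_{(q,t):\varphi(q,t)<0} p(q,t)\,(-\varphi(q,t))$. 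So it suffices to show $N$ is at most a small absolute constant, which is absorbed into the additive slack in the definition of $\Delta$.

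To bound $N$, I will upper-bound $-\varphi(q,t)$ by the positive-part average $\sum_{(x,y)\in t}p(x,y|t)\,[\log(p(q)/p(q|x,y))]_+$ (valid because a sum is at most the sum of the positive parts of its summands), and then reuse the telescoping identity $\sum_{t:(x,y)\in t}\omega_{q,t}=p(q|x,y)$ to collapse the $t$-sum exactly as in the computation of $I(X,Y;Q)$ above. This yields
\[
 N \;\le\; \sum_{\substack{x,y,q:\\ p(q|x,y)<p(q)}} p(x,y)\,p(q|x,y)\,\log\frac{p(q)}{p(q|x,y)}.
\]
Finally, the elementary inequality $-\ln r \le 1/r-1$ for $r>0$ gives, for each fixed $(x,y)$, that the inner sum over $q$ is at most $(\log e)\sum_q (p(q)-p(q|x,y))_+\le \log e$, so $N\le\log e$. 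Combining with the Markov step concludes $\sum_{(q,t)\in\cB} p(q,t)\le (I(X,Y;Q)+\log e)/\Delta\le\delta$, as claimed.
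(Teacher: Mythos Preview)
Your argument is essentially the paper's: the same decomposition $I(X,Y;Q)=\sum_{q,t}p(q,t)\varphi(q,t)$ (your weights $p(x,y)/p_t$ are exactly $p(x,y\mid q,t)$, since $p(q,t)=\omega_{q,t}p_t$), the same observation that $\varphi\ge\Delta$ on $\cB$, and the same reduction of the negative contribution, via the telescoping identity, to the quantity
\[
N' \;=\; \sum_{\substack{q,x,y:\\ p(q|x,y)<p(q)}} p(x,y)\,p(q|x,y)\,\log\frac{p(q)}{p(q|x,y)}.
\]
The only substantive difference is how you bound $N'$, and this is where the proof as written fails.

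You use $-\ln r\le 1/r-1$ to get $N'\le(\log e)\sum_q(p(q)-p(q|x,y))_+\le\log e$ after averaging over $(x,y)$. But $\log e\approx 1.44>1$, while $\Delta=(I(X,Y;Q)+1)/\delta$, so your concluding step $(I(X,Y;Q)+\log e)/\Delta\le\delta$ is equivalent to $\log e\le 1$, which is false. The paper closes this gap by applying Jensen's inequality to $N'$ directly: with $\eta=\sum_{q,x,y:p(q|x,y)<p(q)}p(q,x,y)$ one gets
\[
N'\;\le\;\eta\log\Bigl(\tfrac1\eta\sum p(x,y)p(q)\Bigr)\;\le\;\eta\log\tfrac1\eta\;\le\;\tfrac{\log e}{e}\;<\;1,
\]
which fits under the $+1$ slack in $\Delta$. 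Replacing your elementary inequality by this Jensen step (applied to the very same sum you derived) repairs the proof; everything else stands.
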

\begin{claim}
\label{clm:tilebound}
$\log \sum_{(q,t)\notin\cB} \omega_{q,t} \le \Delta + \log\log (|\cX||\cY|) + 2$.
\end{claim}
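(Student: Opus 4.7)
The plan for \Claimref{tilebound} is to give a geometric upper bound on $\sum_{(q,t)\notin\cB}\omega_{q,t}$, by interpreting each $\omega_{q,t}$ as the area of an axis-parallel rectangle and reducing the claim (for each fixed $q$) to an integration against the hyperbola $xy=\theta_q$, exactly as depicted in \Figureref{tilebound}.

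Fix $q\in\cQ$ and use the enumeration from the proof of \Claimref{telescope}: sort $\cX=\{x_1,\dots,x_M\}$ and $\cY=\{y_1,\dots,y_N\}$ so that $\alpha(q,x_i)$ and $\beta(q,y_j)$ are non-decreasing in $i$ and $j$. For each $t_{ij}\in\cT_q$, the weight $\omega_{q,t_{ij}}=\sigma_{q,t_{ij}}\tau_{q,t_{ij}}$ equals the area of an axis-aligned rectangle $R_{ij}$ with diagonally opposite vertices $(\alpha(q,x_{i-1}),\beta(q,y_{j-1}))$ and $(\alpha(q,x_i),\beta(q,y_j))$. These rectangles tile the bounding box $R=[0,M_\alpha]\times[0,M_\beta]$ without overlap, where $M_\alpha=\max_x\alpha(q,x)$ and $M_\beta=\max_y\beta(q,y)$.

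The key observation is that the monotone sorting forces $\hat\alpha(q,t_{ij})=\alpha(q,x_i)$ and $\hat\beta(q,t_{ij})=\beta(q,y_j)$, exactly the coordinates of the top-right vertex of $R_{ij}$. Thus $(q,t_{ij})\notin\cB$ iff this vertex lies strictly below the hyperbola $xy=\theta_q$, and in that case, by axis-alignment and monotonicity of $R_{ij}$, the entire rectangle lies in $\{xy\le\theta_q\}$. Disjointness then implies $\sum_{t:(q,t)\notin\cB}\omega_{q,t}\le\mathrm{area}\bigl(R\cap\{xy\le\theta_q\}\bigr)$. Using $M_\alpha M_\beta=\max_{x,y}\alpha(q,x)\beta(q,y)=\max_{x,y}p(q|x,y)\le 1$, a direct integration of the region under the hyperbola yields the per-$q$ bound $\sum_{t:(q,t)\notin\cB}\omega_{q,t}\le\theta_q\bigl(1+\max(0,\ln(1/\theta_q))\bigr)$.

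It remains to sum this bound over $q$. Since $\theta_q=p(q)\,2^\Delta$, the linear part contributes $\sum_q\theta_q=2^\Delta$, while for $\sum_q\theta_q\max(0,\ln(1/\theta_q))$ I plan to exploit the concavity of $\theta\mapsto\theta\ln(1/\theta)$ via Jensen's inequality, together with a WLOG bound $|\cQ|\le\mathrm{poly}(|\cX||\cY|)$ obtained by coarsening the pseudotranscript to equivalence classes of identical factorization profiles. This gives $\sum_{(q,t)\notin\cB}\omega_{q,t}\le 2^\Delta\bigl(1+O(\log(|\cX||\cY|))\bigr)$, and taking base-2 logarithms then yields the target $\Delta+\log\log(|\cX||\cY|)+O(1)$. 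The main obstacle I anticipate is precisely this alphabet reduction: the per-$q$ geometric argument is clean and tight, but obtaining the $\log\log$ factor (rather than an undesirable $\log$) requires controlling the effective size of $\cQ$ while preserving both the factorization condition and the information cost.
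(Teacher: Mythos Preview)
Your geometric setup for a fixed $q$---the tiling by rectangles $R_{ij}$, the identification of $(q,t_{ij})\notin\cB$ with the top-right vertex lying below the hyperbola $xy=\theta_q$, and the bound by the area under that hyperbola within $R$---is exactly what the paper does. The divergence, and the gap, is in how you handle the summation over $q$.

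You bound the area within $R=[0,M_\alpha]\times[0,M_\beta]$ under the hyperbola by replacing $M_\alpha M_\beta\le 1$ immediately, obtaining the per-$q$ bound $\theta_q\bigl(1+\max(0,\ln(1/\theta_q))\bigr)$. Summing the logarithmic part then gives essentially $2^\Delta\cdot H(Q)\ln 2$, which forces you to control $|\cQ|$. But the alphabet reduction you propose---merging $q$'s with ``identical factorization profiles''---does not bring $|\cQ|$ down to $\mathrm{poly}(|\cX||\cY|)$ in general: even for $|\cX|=2$, $|\cY|=1$, the profile $(\alpha(q,0),\alpha(q,1))$ ranges over a continuum of non-proportional pairs, so distinct profiles can be uncountably many. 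Concretely, if $Q$ is uniform on $K$ symbols and independent of $(X,Y)$, your summed bound is $2^\Delta(1+\ln K-\Delta\ln 2)\to\infty$ as $K\to\infty$, whereas the true sum $\sum_{(q,t)\notin\cB}\omega_{q,t}$ stays bounded. So the obstacle you anticipated is real, and your proposed workaround does not close it.

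The paper sidesteps this entirely by \emph{not} replacing $M_\alpha M_\beta$ by $1$. It keeps the per-$q$ area bound in the form $\theta_q+\theta_q\ln\bigl(\alpha_q^*\beta_q^*/\theta_q\bigr)$ (when the hyperbola meets $R$), where $\alpha_q^*\beta_q^*=\max_{x,y}p(q|x,y)$. After applying Jensen to the weighted average of $\ln\bigl(\alpha_q^*\beta_q^*/(p(q)2^\Delta)\bigr)$ with weights $p(q)/\lambda$, the sum $\sum_q\alpha_q^*\beta_q^*=2^{\mininfo(X,Y:Q)}\le|\cX||\cY|$ appears directly inside the logarithm, yielding the $\log\log(|\cX||\cY|)$ term with no dependence on $|\cQ|$ whatsoever. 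In short: retaining $\alpha_q^*\beta_q^*$ in the per-$q$ bound is the missing idea; it converts the unbounded $H(Q)$-type term into the always-bounded $\mininfo(X,Y:Q)\le\log(|\cX||\cY|)$.
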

Using these claims, we complete the proof.
Firstly, note that $w'(t) \le w(t)$ for every $t\in\cT$ and, since
$w$ satisfies  condition \Eqref{prt-total}, $w'$ satisfies condition \Eqref{relprt-total}.
Also, from \Claimref{missingmass} it follows that 
\begin{align*}
\avgerr^\mu_{f,w'} 
&= 1-\sum_{x,y} p(x,y) \sum_{\substack{t\in \cT: (x,y)\in t,\\z_t \in f(x,y)}} w'(t)
 = 1-\sum_{x,y} p(x,y) \sum_{\substack{(q,t)\in(\cQ\times\cT)\setminus\cB:\\ (x,y)\in t,\\z_t \in f(x,y)}} \omega_{q,t}\\
&= 1-\sum_{x,y} p(x,y) \sum_{\substack{(q,t)\in\cQ\times\cT:\\ (x,y)\in t,\\z_t \in f(x,y)}} \omega_{q,t}
+\sum_{x,y} p(x,y) \sum_{\substack{(q,t)\in\cB:\\ (x,y)\in t,\\z_t \in f(x,y)}} \omega_{q,t}\\
&=   \avgerr^\mu_{f,w} + \sum_{(q,t)\in\cB} \sum_{\substack{(x,y)\in t,\\z_t \in f(x,y)}} p(x,y) \omega_{q,t}
 \;\;\le\;\; \avgerr^\mu_{f,w} + \sum_{\substack{(q,t)\in\cB}} \sum_{(x,y)\in t} p(x,y) \omega_{q,t}\\
&= \avgerr^\mu_{f,w} + \sum_{\substack{(q,t)\in\cB}} \sum_{(x,y)\in \cX\times\cY} p(x,y) p(q,t|x,y) 
	&\text{ by \Eqref{omegaqt-pqtxy}}\\
&=   \avgerr^\mu_{f,w} + \sum_{\substack{(q,t)\in\cB}} p(q,t) \le \ceps +
\delta 
	&\text{ by \Claimref{missingmass}}
\end{align*}
Hence,
\begin{align*}
\log \relprt^\mu(f,\ceps+\delta) &\le \sum_{t\in\cT} w'(t) = \log \sum_{(q,t)\notin\cB} \omega_{q,t} \\
& \le \Delta + \log\log |\cX||\cY| + 2 
	&\text{by \Claimref{tilebound}}\\
&= \frac{I(X,Y;Q)}{\delta} +\frac1{\delta} + \log\log|\cX||\cY| + 2 \\
&\le \frac{I(X,Y;Q)}{\delta} + \log\log|\cX||\cY| + \frac3\delta  & \text{ since } \delta \in [0,1]
\end{align*}
That is, $I(X,Y;Q) \geq \delta\log \relprt^\mu(f,\ceps+\delta) + (\delta \log\log|\cX||\cY| + 3)$,
as was required to prove.
\end{proof}

The proofs of the two claims used above follow.

\subsubsection{Proof of \Claimref{missingmass}}
\label{app:missingmass}

\begin{proof}
This claim follows from Markov's inequality applied to an appropriate
random variable, whose mean is related to $I(XY;Q)$. First, we
expand $I(XY;Q)$ as follows:
\begin{align*}
I(XY;Q) 
&= \sum_{q\in\cQ,x\in\cX,y\in\cY} p(q,x,y) \log \frac{p(q|x,y)}{p(q)} \\
&= \sum_{q\in\cQ,t\in\cT,x\in\cX,y\in\cY} p(q,t,x,y) \log \frac{p(q|x,y)}{p(q)} \\
&= \sum_{q\in\cQ,t\in\cT} p(q,t) \sum_{(x,y)\in t} p(x,y|q,t) \log \frac{p(q|x,y)}{p(q)} 
	&\text{since } (x,y)\notin t \implies p(q,t,x,y)=0 \\
&= \sum_{q\in\cQ,t\in\cT} p(q,t) \varphi(q,t)
\end{align*}
where we have defined 
\begin{align*}
\varphi(q,t) = \begin{cases}
\sum_{(x,y)\in t} p(x,y|q,t) \log \frac{p(q|x,y)}{p(q)} & \text{if } p(q,t)\not=0 \\
0 & \text{otherwise}
\end{cases}
\end{align*}
That is, $\varphi(q,t)$ is the average value of $\log \frac{p(q|x,y)}{p(q)}$
averaged over all $(x,y)\in t$ using the distribution $\prob_{XY|Q=q,T=t}$.
We note that for all $(q,t)\in\cB$,
$\varphi(q,t) \ge \Delta$, since for each 
$(x,y)\in t$, $p(q|x,y)=\alpha(q,x)\beta(q,y) 
\ge \hat\alpha(q,t)\hat\beta(q,t) \ge \theta_q$
and hence $\log \frac{p(q|x,y)}{p(q)} \ge \log \frac{\theta_q}{p(q)} =
\Delta$. This suggests the possibility of using the Markov inequality to bound
$\sum_{(q,t)\in\cB} p(q,t)$. However, $\varphi(q,t)$ could be negative, and we cannot directly use the
above expression for $I(X,Y;Q)$ in a Markov inequality. However, we claim that
removing the negative terms from $\sum_{q,t} p(q,t)\varphi(q,t)$
does not increase the sum significantly, which will let us 
still apply the Markov inequality.

More precisely, let $\cD=\{ (q,t) \in \cQ\times\cT \mid
\min_{(x,y)\in t} p(q|x,y) \ge p(q) \}$. 
Note that if $(q,t)\in\cD$, then $\varphi(q,t) \ge 0$.
We claim that 
\begin{align}
\label{eq:I-smallloss}
I(X,Y;Q) \ge  \left( \sum_{(q,t)\in\cD} p(q,t)\varphi(q,t) \right) - 1.
\end{align}
Assuming \Eqref{I-smallloss}, we can conclude the proof of the claim as
follows. Note that $\cB \subseteq \cD$ since if $(q,t)\in\cB$,
$\min_{(x,y)\in t} p(q|xy) = \hat\alpha(q,t)\cdot\hat\beta(q,t) \ge \theta_q
\ge p(q)$. Also, recall that for $(q,t)\in\cB$, $\varphi(q,t) \ge \Delta$. Hence, 
\[ \delta\Delta = I(X,Y;Q) + 1 \ge \sum_{(q,t)\in \cD} p(q,t)\varphi(q,t) \ge \Delta\sum_{(q,t)\in\cB} p(q,t),\]
and therefore $\sum_{(q,t)\in\cB} p(q,t) \le \delta$.

To prove \Eqref{I-smallloss}, consider again the expansion of $I(X,Y;Q)$
as
\[ 
I(XY;Q) 
= \left( \sum_{\substack{q\in\cQ,x\in\cX,y\in\cY:\\ p(q|x,y) \ge p(q)}} p(q,x,y) \log \frac{p(q|x,y)}{p(q)} \right)
- \left( \sum_{\substack{q\in\cQ,x\in\cX,y\in\cY:\\ p(q|x,y) < p(q)}} p(q,x,y) \log \frac{p(q)} {p(q|x,y)} \right).\]
in which all the terms within each summation is non-negative. 
To bound the second term, writing $\eta =
\sum_{q,x,y: p(q|x,y) < p(q)} p(q,x,y)$, we use Jensen's inequality to write
\begin{align*}
\sum_{\substack{q\in\cQ,x\in\cX,y\in\cY:\\ p(q|x,y) < p(q)}} p(q,x,y) \log \frac{p(q)} {p(q|x,y)}
&\le 
\eta \log \sum_{\substack{q\in\cQ,x\in\cX,y\in\cY:\\ p(q|x,y) < p(q)}} \frac{p(q,x,y)}{\eta} \cdot \frac{p(q)} {p(q|x,y)}\\
&=   \eta \log \frac1\eta + \eta \log \sum_{\substack{q\in\cQ,x\in\cX,y\in\cY:\\ p(q|x,y) < p(q)}} p(x,y)p(q)\\
&\le \eta \log \frac1\eta  \le \frac{\log e}e < 1.
\end{align*}
where to get to the last line we used the fact that
$\displaystyle \sum_{\substack{q\in\cQ,x\in\cX,y\in\cY:\\ p(q|x,y)<p(q)}} p(x,y)p(q) \le \sum_{q\in\cQ,x\in\cX,y\in\cY} p(x,y)p(q) = 1$.
Hence 
\begin{align*}
I(XY;Q) 
&\ge \left(\sum_{\substack{q\in\cQ,x\in\cX,y\in\cY:\\ p(q|x,y) \ge p(q)}} p(q,x,y) \log \frac{p(q|x,y)}{p(q)} \right)- 1 \\
&\ge \left(\sum_{(q,t)\in\cD,(x,y)\in t} p(q,t,x,y) \log \frac{p(q|x,y)}{p(q)} \right) - 1 
 &\text{ since } (x,y)\in t, (q,t) \in \cD \implies p(q|x,y) \ge p(q)\\
&= \left(\sum_{(q,t)\in\cD} p(q,t) \varphi(q,t) \right) - 1
\end{align*}
completing the proof of \Eqref{I-smallloss} and of the claim.
\end{proof}

\subsubsection{Proof of \Claimref{tilebound}}
\label{app:tilebound}

\begin{proof}
We need to upper-bound
\[ \sum_{\substack{q\in\cQ,t\in\cT:\\(q,t)\notin\cB}} \omega_{q,t} =
\sum_{q\in\cQ} \sum_{\substack{t\in\cT_q:\\(q,t)\notin\cB}} \sigma_{q,t}\tau_{q,t}. \]
For this we shall use a geometric interpretation of this sum.

Fix $q\in\cQ$. Recall from the proof of \Claimref{telescope},
that for each $q$, we order $\cX = \{ x_1,\cdots,x_M \}$ and
$\cY = \{ y_1,\cdots,y_N \}$ such that $\alpha(q,x_i) \ge \alpha(q,x_{i-1})$ 
and $\beta(q,y_j) \ge \beta(q,y_{j-1})$ (taking
$\alpha(q,x_0)=\beta(q,y_0)=0$), and
$t_{ij} = (\cX_i \times \cY_j,z_q)$ for
$(i,j)\in[M]\times[N]$, where $\cX_i = \{x_i,\cdots,x_M\}$,
$\cY_j=\{y_j,\cdots,y_N\}$. Then
\[\cT_q = \{ t_{ij} \mid (i,j)\in[M]\times[N], \alpha(q,x_i) >
\alpha(q,x_{i-1}), \beta(q,y_j) > \beta(q,y_{j-1}) \}. \]
Consider the rectangular region $R \subseteq \Real^2$ defined
by the diagonally opposite vertices $(0,0)$ and $(\alpha_q^*,\beta_q^*)$,
where $\alpha_q^* = \max_{x\in\cX} \alpha(q,x)$ and
$\beta_q^* = \max_{y\in\cY} \beta(q,y)$. For each $(i,j)\in[M]\times[N]$
let the (possibly empty) rectangular region $R_{ij}$ be defined by
opposite vertices $(\alpha(q,x_{i-1}),\beta(q,y_{j-1}))$ and $(\alpha(q,x_{i}),\beta(q,y_{j}))$.
(See \Figureref{tilebound}.)
Then note that the entire region $R$ is tiled by the rectangles $R_{ij}$,
without any overlap:
\begin{align*}
R = \bigcup_{(i,j)\in[M]\times[N]} R_{ij} && (i,j)\not=(i',j') \implies
R_{ij} \cap R_{i'j'} = \emptyset.
\end{align*}
Further, the area of the rectangle $R_{ij}$ is the same as
$\omega_{q,t_{ij}} =  \sigma_{q,t_{ij}}\tau_{q,t_{ij}} = \left({\alpha(q,x_i)} - {\alpha(q,x_{i-1})}\right)\left({\beta(q,y_j)} - {\beta(q,y_{j-1})}\right)$. 
Thus,
\[ \sum_{\substack{t\in\cT_q:\\(q,t)\notin\cB}} \omega_{q,t} =
\sum_{\substack{(i,j)\in[M]\times[N]:\\(q,t_{ij})\notin\cB}} \mathrm{area}(R_{ij}).\]

Now we need to identify the rectangles $R_{ij}$ such that
$(q,t_{ij})\notin\cB$. Firstly, recall that $\hat\alpha(q,t_{ij}) =
\min_{(x,y)\in t_{ij}} \alpha(q,x) = \alpha(q,x_i)$, and similarly
$\hat\beta(q,t_{ij})=\beta(q,y_j)$. Hence $(q,t_{ij})\in\cB$ if and only if
$\alpha(q,x_i)\beta(q,y_j) \ge \theta_q$. In terms of the rectangle $R_{ij}$
this corresponds to having its top-right vertex (i.e.,
$(\alpha(q,x_i),\beta(q,y_j))$) fall ``above'' the hyperbola defined by the
equation $xy=\theta_q$. Thus if $(q,t_{ij})\notin\cB$, then the entire
rectangle $R_{ij}$ is below the hyperbola $xy=\theta_q$. The sum of their
areas is upper-bounded by the area within $R$ that is under this hyperbola.

We consider two cases for $q$: when the hyperbola
intersects $R$ and when it does not; the latter happens when $\theta_q >
\alpha_q^*\beta_q^*$. Let $\cS = \{ q \mid \theta_q > \alpha_q^*\beta_q^* \}$.
If $q\in\cS$, then clearly the area of $R$ below the hyperbola is
the entire area, $\alpha_q^*\beta_q^*$. Otherwise, the area under the
hyperbola is found by integration as
\[ \theta_q + \int_{\frac{\theta_q}{\beta_q^*}}^{\alpha_q^*}
\frac{\theta_q}{x} dx = \theta_q + \theta_q \ln \frac{\alpha_q^*\beta_q^*}{\theta_q},\]
where $\ln$ stands for natural logarithm. 

Let $\lambda = \sum_{q\notin\cS} p(q)$. Then,
\begin{align*}
\sum_{\substack{(q,t)\in(\cQ\times\cT)\setminus\cB:\\q\in\cS}} \omega_{q,t}
&= \sum_{q\in\cS} \alpha_q^*\beta_q^* \le \sum_{q\in\cS} \theta_q = (1-\lambda) 2^\Delta \\
\sum_{\substack{(q,t)\in(\cQ\times\cT)\setminus\cB:\\q\notin\cS}} \omega_{q,t}
&\le \sum_{q\in\cQ\setminus\cS} \theta_q + \theta_q \ln \frac{\alpha_q^*\beta_q^*}{\theta_q} \\
&= \lambda2^\Delta 
	+ \lambda2^\Delta\sum_{q\in\cQ\setminus\cS} \frac{p(q)}{\lambda}\ln \frac{\alpha_q^*\beta_q^*}{p(q)2^\Delta} \\
&\le \lambda2^\Delta
	+ \lambda2^\Delta\ln \sum_{q\in\cQ\setminus\cS} \frac{\alpha_q^*\beta_q^*}{\lambda2^\Delta} 
	& \text{By Jensen's inequality}\\
&\le \lambda2^\Delta
	+ \lambda2^\Delta \ln \left( \sum_{q\in\cQ} \alpha_q^*\beta_q^* \right) 
	+ \lambda2^\Delta \ln\frac1{\lambda2^\Delta} \\
&\le \lambda2^\Delta + 2^\Delta \cdot \mininfo(X,Y:Q) \cdot \ln 2 + \frac1e
	& \text{since for all } a>0,  a\ln\frac1a \le \frac1e \\
&\le \lambda2^\Delta + 2^\Delta \cdot \log|\cX||\cY| \cdot \ln 2 + \frac1e &\text{since }  \mininfo(X,Y:Q)\leq\log |\cX||\cY|\\
\sum_{(q,t)\in(\cQ\times\cT)\setminus\cB} \omega_{q,t}
	&\le 2^\Delta (1 +  \log|\cX||\cY| \cdot \ln 2 + \frac1e) \\
	&\le 2^\Delta (4\log|\cX||\cY|) &\text{since } |\cX||\cY| \ge 2
\end{align*}

Note that we assumed $|\cX||\cY|\ge 2$, because otherwise $|\cX|=|\cY|=1$
and the theorem holds trivially (with LHS being 0 and RHS being negative).
From the above we obtain that $\log
\sum_{(q,t)\in(\cQ\times\cT)\setminus\cB} \omega_{q,t} \le \Delta + 
\log\log|\cX||\cY| + 2$ completing the proof of the claim.
\end{proof}
\end{appendices}

\end{document}